\newtheorem{theorem}{Theorem}
\newtheorem{lemma}[theorem]{Lemma}
\newtheorem{claim}[theorem]{Claim}
\DeclareMathOperator*{\argmax}{argmax}
\DeclareMathOperator*{\argmin}{argmin}
\newtheorem{defn}{Definition}
\title{Exploiting No-Regret Algorithms in System Design}
\author{
  Le Cong Dinh\\
  University of Southampton\\
  United Kingdom\\
  \texttt{l.c.dinh@soton.ac.uk} \\
  % examples of more authors
  \And
  Nick Bishop \\
  University of Southampton\\
  United Kingdom\\
  \texttt{nb8g13@soton.ac.uk} \\
  \And
  Long Tran-Thanh \\
  University of Warwick\\
  United Kingdom\\
  \texttt{long.tran-thanh@warwick.ac.uk} \\
  % Affiliation \\
  % Address \\
  % \texttt{email} \\
  % \AND
  % Coauthor \\
  % Affiliation \\
  % Address \\
  % \texttt{email} \\
  % \And
  % Coauthor \\
  % Affiliation \\
  % Address \\
  % \texttt{email} \\
  % \And
  % Coauthor \\
  % Affiliation \\
  % Address \\
  % \texttt{email} \\
}
\begin{document}

\maketitle

\begin{abstract}
We investigate a repeated two-player zero-sum game setting where the column player is also a designer of the system, and has full control on the design of the payoff matrix. In addition,  the row player uses a no-regret algorithm to efficiently learn how to adapt their strategy to the column player's behaviour over time in order to achieve good total payoff. 
The goal of the column player is to guide her opponent to pick a mixed strategy which is favourable for the system designer. Therefore, she needs to: (i) design an appropriate payoff matrix $A$ whose unique minimax solution contains the desired mixed strategy of the row player; and (ii) strategically interact with the row player during a sequence of plays in order to guide her opponent to converge to that desired behaviour. 
To design such a payoff matrix, we propose a novel solution that provably has a unique minimax solution with the desired behaviour.
We also investigate a relaxation of this problem where uniqueness is not required, but all the minimax solutions has the same mixed strategy for the row player.
Finally, we propose a new game playing algorithm for the system designer and prove that it can guide the row player, who may play a \emph{stable} no-regret algorithm, to converge to a minimax solution.
 
\end{abstract}

\section{Introduction}

We consider a repeated two-player zero-sum game setting, in which one player (say the column player) is also a designer of the system (i.e., she can design the payoff matrix), and her opponent (the row player) is a strategic unility maximiser, who can efficiently learn how to adapt their strategy to the column player's behaviour over time in order to achieve good total payoff. 
 The goal of the column player is to guide her opponent to pick up a mixed strategy which is favourable for the system designer. In particular, she needs to achieve this by: (i) designing an appropriate payoff matrix $A$; and (ii) strategically interacting with the row player during a sequence of plays in order to guide her opponent to converge to the desired behaviour.
 
 This problem is motivated by a number of real-world applications.
 For example, consider a shared-resource system in which selfish and rational agents independently decide how much resources they want to use from the pool of shared resources in order to increase their own utility (the more they get the better for them).
Being selfish, the agents will prefer maximising their own utility over optimising the social welfare, which often leads to the depletion of the shared resources. This situation is known as the tragedy of commons in the economics and computer science literature. One way to mitigate this situation is to have a designated agent (a.k.a. the system designer) who has the power to reshape the payoff of the other agents by interfering into the system whenever the others take an action~\cite{han2018cost,hilbe2018evolution,tilman2020evolutionary}. By repeatedly doing so over time, the system designer's goal is to enforce the other players, guided by their strategic and selfish nature, to converge to a desired behaviour which can help maintain the amount of the shared resources in the long run.

Another example comes from the domain of security games~\cite{shi2018designing,sinha2018stackelberg,xu2016playing}. 
In this setting, it is typical to have  a defender and an attacker repeatedly playing a zero-sum two-player game with each other. 
The attacker's goal is to cause as much damage as possible by attacking a set of targets at each time step, and the defender's goal is to protect these targets from the attacker.
A key research question in this domain is how the defender can prevent the attacker to choose the most important targets by faking the value of such attacks (e.g., by using honeypots to fool cyber attackers in computer networks, or to provide false information of animals' whereabout  in the fight against illegal poachers). Once this is done, the next step is then how the defender should interact with the attacker over time so that the attacker believes the false targets are in fact the important ones.

While this problem has a wide applicability across many domains of game theory and computer science  (as shown by the examples above), it is also difficult to solve.
In fact, it consists of the following two sub-problems, each of which is a challenging problem themselves:

\par \textbf{Games with a unique minimax solution}. It is well known that in two-player zero-sum games, if both players are perfectly strategic, then the best they can achieve is the minimax solution of the game~\cite{Nicolo06}.
%Furthermore, this can be achieved if both players repeatedly interact with each other by using no-regret algorithms~\cite{Freund99,Nicolo06}.
This naturally lends itself to the idea of designing a game whose minimax solution contains the desired behaviour of the row player.
In this game, due to their rational nature, the row player will follow the desired behaviour.
However, if the game has more than one minimax solution, the row player might pick a solution which differs from the desired behaviour. 
Therefore, to guarantee that the row player will choose the desired behaviour for sure, the system designer has to ensure that the game has a unique minimax solution with the desired behaviour of the row player as part of the solution~\footnote{This requirement can be relaxed by allowing multiple minimax solutions, but each of them should have the same behaviour for the row player. See Theorem~\ref{theorem: unique for row player} for more details.}. 
While this is an age-old problem and can be dated back to the seminal work of Shapley, Karlin, and Bohnenblust~\cite{bohnenblust1950}, it was largely ignored by the research community and only started regaining attention since the recent work of~\cite{kuleshov2015inverse}. %in which the problem wa as inverse game theory.

\par \textbf{Last round convergence in zero-sum games}.
Once a game with unique minimax solution is established, the next step is to incentivise the row player to play the minimax solution.
The easiest way is perhaps to share the information about the minimax solution with the row player, and assuming that they are rational, we can hope that they will eventually pick up the desired behaviour.
However, this only works if there is a trust between the players (i.e., the row player trusts the information provided by the system designer), which might not always be the case in non-cooperative settings~\cite{xu2015exploring,dughmi2019algorithmic}.
In addition, it can be easily shown that by repeatedly playing her part of the minimax strategy, the system designer will not be able to push the row player to converge either~\footnote{We will discuss this in more detail in Section~\ref{section: last round convergence}.}. 

So how can we still be able to enforce the row player to play the desired behaviour?
A potential answer lies within the well known result that if the players use no-regret algorithms (see Section~\ref{section: problem setting} for the definition of this term) to choose their strategies over time, the game dynamic will eventually converge to the minimax solution of the game on average. This can be done without requiring any information exchange between the players, as they only need to use the observed payoffs to update their strategies~\cite{Nicolo06,Freund99}. 
Therefore, our hypothesis is that if the row player uses a no-regret algorithm to learn the minimax solution, we can somehow exploit this and guide the row player towards the desired behaviour.
However, convergence on average does not necessarily imply convergence in the actual chosen strategies, and thus, cannot guarantee the actual convergence in behaviour~\cite{mertikopoulos2018}. This issue is known as the last round convergence problem in the online learning literature, and has recently attracted a good amount of attention~\cite{Daskalakis2018c,daskalakis2017training,dinh2020last}.  

\subsection{Main Contributions}

Against this background, this paper aims to address the abovementioned research questions as follows:

\par 1. We first propose a simple and computationally efficient approach to design a two-player zero-sum game in which the minimax strategy for the row player is unique, and is the desired behaviour (Theorems~\ref{First theorem of constructing matrix A} and \ref{theorem: unique for row player}). The efficiency of our algorithm lies within the fact that we only require the minimax strategy of the row player to be predefined, whilst the minimax strategies of the column player are free to be chosen as required. Note that other existing techniques aim to solve the case where the behavior of both players are predefined. 
In addition, our result directly exploits the connection between linear programming (LP) and zero-sum games \cite{goldmanLP}, which will hopefully shed new light on the literature of finding games with unique minimax solutions.

%\par 2. We also relax the requirement of uniqueness, and propose a method to generate a game with (potentially) multiple minimax solutions, but every solution contains the same desired (and predefined) behaviour of the row player (Theorem~\ref{theorem: unique for row player}).
%Therefore, we argue that this result is the first of its kind (i.e., a provable solution to this relaxed problem).

\par 2. In addition, we prove that LRCA, an algorithm recently proposed by~\cite{dinh2020last}, can be used by the system designer to guide the row player to achieve last round convergence, even if the row player chooses from \emph{a large class} of no-regret algorithms to play (Theorem~\ref{extension of LRCA algorithm}). In particular, this class of algorithms has a \emph{stability} property (see Section~\ref{section: last round convergence} for more details). We also prove that a wide class of popular no-regret algorithms, Follow the Regularized Leader (FTRL), has this property (Theorem`\ref{FTRL has stability property}).

\subsection{Related Work}
\label{section: related work}

Our problem setting can be viewed from as a mechanism design (MD) problem~\cite{nisanAGT}. From this perspective, we can consider the game matrix chosen by the designer as the mechanism, and the actions chosen by each participant as the information they choose to report. 
%Our work differs from the traditional MD setting in the following ways:
%
%Firstly, we note that the mechanism designer is also a participating agent in our setting. One way to interpret this is to see the strategy chosen by the mechanism designer at each time step as  an alteration made to the mechanism which accounts for the knowledge the other player has gathered through time. 
In this domain, perhaps the most similar framework to our problem setting is the online MD framework~\cite{parkesAGT,gerding2011online}, in which a central mechanism must make decisions over time as different agents arrive and depart at different time steps. However, our setting deals with agents which do not depart or arrive, but rather gain knowledge about the central mechanism as time moves on. 
Secondly, the goal of the designer is distinct from typical MD settings. Rather than standard solution concepts such as incentive compatibility or social welfare, we aim for the goal of guiding players into playing specific strategies. Such solution concepts are common amongst the online learning community in which the problem of playing a repeated game against another agent is explored under various conditions. 

The problem of constructing zero-sum games with unique minimax equilibrium was first considered by~\cite{bohnenblust1950}. In fact ~\cite{bohnenblust1950} provide an algorithm for constructing a zero-sum game with a chosen unique minimax equilibrium. In contrast, we provide an alternate and computationally efficient method for finding a zero-sum game with a given unique minimax equilibrium, by directly exploiting the connection between linear programming (LP) and zero-sum games~\cite{goldmanLP}.
In particular, the topic of certifying the uniqueness of linear programming solutions has been studied extensively within the optimisation community. Appa~\cite{appa2002uniqueness} provides a constructive method for verifying the uniqueness of an LP solution which requires solving an addition linear program. Mangasarian~\cite{mangasarian1979uniqueness} describes a number of conditions which guarantee the uniqueness of an LP solution. In fact, it is one of these conditions that we shall leverage to derive our methods for constructing zero-sum games with unique solutions. More generally, many other works deal with characterising the optimal solution sets of linear programs \cite{Kantor, Tavas}, but do not typically pay any special attention to uniqueness. 

Finally, while convergence on average of no-regret learning dynamics has been studied extensively in game theory and online learning communities (e.g.,~\cite{Nicolo06,Freund99}), last round convergence has only been a topic of research in the last few years.  This started with the negative result of Bailey and Piliouras \cite{Bailey2018}, who showed that if agents use a no-regret multiplicative weight update algorithm then the last round strategy converges to the boundary (i.e., it will not converge to any minimax solution). Later, Mertikopoulos \emph{et al.}~\cite{mertikopoulos2018} showed that there is a loop in the strategies played if both agents follow regularized learning algorithms. Daskalakis and Panageas~\cite{Daskalakis2018c} showed that, if both players play according to the optimistic multiplicative weight update (OMWU) algorithm, then they will achieve last round convergence. More recently, Dinh \emph{et al.}~\cite{dinh2020last} proposed the last round convergence in asymmetric games (LRCA) algorithm, which if played by one agent, guarantees last round convergence for both agents if the other agent follows a certain no-regret algorithm. 

\section{Preliminaries}
To begin, we shall introduce some basic definitions from game theory through which our problem setting will be formally described. We define a finite normal form two-player zero-sum game, $\Gamma$, by a tuple $(\mathcal{N}, \mathcal{A}, u)$. We denote the set of players by $\mathcal{N} = \{1, 2\}$. Each player $i \in \mathcal{N}$ is required to select from a finite action set $\mathcal{A}_{i}$ simultaneously. We denote by $\mathcal{A} = \mathcal{A}_{1} \times  \mathcal{A}_{2}$ the set of all possible combinations of actions that may be chosen by the players. After actions have been chosen, each player $i$ is rewarded according to their payoff function $u_{i}: \mathcal{A} \rightarrow \mathbb{R}$. 
In the zero-sum games setting, we have $u_{2} = -u_{1}$. Such games can be succinctly represent by a single matrix $A \in \mathbb{R}^{|\mathcal{A}_{1}| \times |\mathcal{A}_{2}|}$, where entry $A_{ij}$ denotes the payoff of player 1 when actions $i \in \mathcal{A}_{1}$ and $j \in \mathcal{A}_{2}$ are selected by the players. 
Additionally, players are allowed to randomise their action selection. Formally speaking, player $i$ can select any probability distribution $s \in \Delta(\mathcal{A}_{i})$ over their action set. An action is then selected by randomly sampling according to this distribution. We refer to this set of probability distributions as the strategies available to the player. We say that a strategy is pure if it corresponds to deterministically choosing a single action, otherwise we say that a strategy is mixed.
%
%An instructive subclass of finite normal form games to consider is the case where $\mathcal{N} = \{1, 2\}$. In this instance, 
%In our setting, the utility function for player 1 can be expressed by a matrix $A \in \mathbb{R}^{|\mathcal{A}_{1}| \times |\mathcal{A}_{2}|}$, where entry $A_{ij}$ denotes the payoff for player 1 when actions $i \in \mathcal{A}_{1}$ and $j \in \mathcal{A}_{2}$ are selected by the players. 
%A similar matrix, $B \in \mathbb{R}^{|\mathcal{A}_{1}| \times |\mathcal{A}_{2}|}$, can be constructed for player 2, and as a result, normal form games belonging to this subclass are typically referred to as bimatrix games.
%
%\par A further subclass of bimatrix games of particular interest is two player zero-sum games. A two-player zero-sum game is a finite normal form game for which $\mathcal{N} = \{1, 2\}$ and 
%Let $u_i$ denote the payoff of player $i$.

\par Given this matrix representation, we will interchangeably refer to player $1$ and player $2$ as the row and column players respectively. For notational convenience, we shall also refer to $|\mathcal{A}_{1}|$ by $n$ and $|\mathcal{A}_{2}|$ by $m$. We will denote the strategy adopted by player 1 (row player) by the vector $\mathbf{x}$, where $\mathbf{x}_{i}$ refers to the probability of player 1 choosing action $i$. Similarly, we will use the vector $\mathbf{y}$ for for the column player's adopted strategy.
Note that the expected payoff for player 1 when strategies $(\mathbf{x}, \mathbf{y})$ are played is given by $-\mathbf{x}^{T}A\mathbf{y}$. Similarly the expected payoff for player 2 is simply $\mathbf{x}^{T}A\mathbf{y}$. It is well known that, for any payoff matrix $A$, there is a minimum expected payoff that both players can guarantee, known as the value of the game, $v \in \mathbb{R}$. The existence of $v$ is a direct consequence of Von Neumann's Minimax theorem: $
    \label{minmax}
    v = \min_{\mathbf{x}\in\Delta_{n}}\max_{\mathbf{y} \in \Delta_{m}}\mathbf{x}^{T}A\mathbf{y}
    = \max_{\mathbf{y}\in\Delta_{m}}\min_{\mathbf{x} \in \Delta_{n}}\mathbf{x}^{T}A\mathbf{y}$. 

Any strategy which is guaranteed to attain expected payoff $v$ for a given player is called a minimax strategy. Any pair $(\mathbf{x}^{*}, \mathbf{y}^{*})$ of minimax strategies are said to form a minimax equlibrium. Finding a minimax strategy for a given player is equivalent to solving one of the saddle point problems in equation \ref{minmax}, which is in turn equivalent to solving a linear programming (LP) problem. For example, finding a minimax strategy $\mathbf{x}^{*}$ for the row player corresponds to solving the following LP problem:

\begin{equation}
\label{xlp}
\begin{aligned}
    \min_{v, \; \mathbf{x}}v \quad
    \text{s.t. \quad} v - \sum_{i=1}^{n}A_{ij}\mathbf{x}_{i} \geq 0 \quad \forall j \in [m], \quad %\\
    \sum_{i=1}^{n} \mathbf{x}_{i} = 1, \quad
    \mathbf{x}_{i} \geq 0 \quad \forall i \in [n]
\end{aligned}
\end{equation}

%\par In what follows, we will consider two players engaged in a repeated zero-sum game. At each time step $t = 1, \dots, T$, each player commits to a mixed strategy, $\mathbf{x}_{t}$ and $\mathbf{y}_{t}$. After choosing their mixed strategies each player observes a payoff vector $-A\mathbf{y}_{t}$ and $A^{T}\mathbf{x}_{t}$, which correspond to the expected payoffs of each pure strategy against the strategy employed by the other player. Note here that $\mathbf{x}_{t}$ is a vector describing the mixed strategy chosen by player $1$ at time step $t$. We will refer to a single element of such a vector by $\mathbf{x}_{i, t}$. We use a similar notation for the strategies, $\mathbf{y}_{t}$, selected by the column player.

\section{Problem Setting}
\label{section: problem setting}
\par In this paper, we consider a setting in which two players are engaged in a repeated zero-sum game, $\Gamma$, for $T$ time steps. At each time step, $1 \leq t \leq T$, each player must commit to a mixed strategy $\mathbf{x}_{t}$ and $\mathbf{y}_{t}$. 
We will refer to a single element of such a vectors by $\mathbf{x}_{i, t}$ and $\mathbf{y}_{i,t}$, respectively.
After choosing their mixed strategies, each player observes their payoff vectors, $A\mathbf{y}_{t}$ and $A^{T}\mathbf{x}_{t}$, which correspond to the expected payoff of each of player's pure strategies against the strategy chosen by the other player.

%\textcolor{red}{Nick: I feel like I am repeating myself here...}

\par The row player has no initial knowledge of $\Gamma$ and only has its history of previous strategy choices and payoff vectors, $((\mathbf{x}_{1}, A\mathbf{y}_{1}), \dots, (\mathbf{x}_{t-1}, A\mathbf{y}_{t-1}))$,  to inform its selection of $\mathbf{y}_{t}$. The goal of the row player is to minimise their average payoff 
%\begin{equation}
  $\frac{1}{T}\sum_{t=1}^{T}\mathbf{x}_{t}^{T}A\mathbf{y}_{t}$. 
%\end{equation}
%
In contrast, as well as being a participant in the game, the column player also takes the role of the system designer. Therefore, the column player has full knowledge of $\Gamma$ prior to playing. The column player is tasked with guiding the row player into selecting a strategy $\mathbf{x}^{*}$ at time step $T$, while simultaneously maximising their own average payoff. To aid in this endeavour, the column player is permitted to choose a payoff matrix $A$ prior to the start of the game. Thus the problems of the column player are two-fold in the following sense:

\begin{enumerate}
    \item The column player must choose matrix $A$ which allow them to guide the row player reliably to the desired final iterate $\mathbf{x}^{*}$.
    \item Given $A$, the column player must select a sequence of mixed strategies which guide the row player to the desired final iterate while still accumulating high average payoff.
\end{enumerate}

\par We shall now formalise these goals. Firstly, we shall introduce the well-established notion of regret as a metric for measuring the performance of both players with respect to the payoffs they accumulate over time. 

\begin{defn}
The regret of any sequence of strategies $(\mathbf{x}_{1}, \dots, \mathbf{x}_{T})$ chosen by the row player with respect to a fixed strategy $\mathbf{x}$ is given by
\begin{equation}
    \mathcal{R}_{T, \mathbf{x}} = \sum_{t=1}^{T}\mathbf{x}^{T}_{t}A\mathbf{y}_{t} - \sum_{t=1}^{T}\mathbf{x}^{T}A\mathbf{y}_{t}
\end{equation}
that is, the regret is the difference between the payoff accumulated by the sequence $(\mathbf{x}_{1}, \dots, \mathbf{x}_{t})$ and the payoff accumulated by the sequence where a given fixed mixed strategy $\mathbf{x}$ is chosen at each time step. A similar notion of regret is defined for the column player.
\end{defn}
%\textcolor{red}{Nick: There must be a nicer way to define regret than this....}

\par We shall say that the row player is `successful` in accumulating high average payoff if their regret with respect to the sequence of strategies chosen by the column player is sublinear in $T$:

\begin{equation}
    \lim_{T \rightarrow \infty}\:\max_{\mathbf{x} \in \Delta_{n}}\:\frac{\mathcal{R}_{T, \mathbf{x}}}{T} = 0
\end{equation}

\par A notion of successfully accumulating high average payoff is similarly defined for the column player.
%
%\par Lastly, we must define what it means for the column player to be successful in guiding the row player into selecting a desired final iterate. We measure the efficacy of column player by computing the Kullback-Leibler Divergence between the final iterate $\mathbf{x}_{t}$ and the desired final iterate $\mathbf{x}^{*}$.
%
%\textcolor{red}{Nick: I think we need to define some notion of sublinear divergence here.}
%
Lastly, we must define what it means for the column player to be successful in guiding the row player into selecting a desired final iterate. We say that the column player is successful in guiding the row player into selecting a chosen strategy $\mathbf{x}^{*}$ if the last iterate strategy of the row player is within a $\delta$-neighbourhood of $\mathbf{x}^{*}$.

\begin{defn}
 $\forall \delta > 0$: a mixed strategy $\mathbf{x}$ is in a $\delta$-neighbourhood of $\mathbf{x}^{*}$ if and only if $\|\mathbf{x} - \mathbf{x}^{*}\|_{2} \leq \delta$.
\end{defn}

%\begin{defn}
%The relative entropy, or K-L divergence, between two vectors $\mathbf{x}$ and $\mathbf{x}^{*}$ in $\Delta_{n}$ is given by 

%\begin{equation}
%    RE(\mathbf{x}\:||\:\mathbf{x}^{*}) = \sum_{i=1}^{n}\mathbf{x}_{i}\log\left(\frac{\mathbf{x}_{i}}{\mathbf{x}^{*}_{i}}\right)
%\end{equation}
%\end{defn}

%\par Similarly to the case of regret, we say that the column player is successful in guiding the row player into selecting a chosen strategy $\mathbf{x}^{*}$ if the K-L divergence converges to 0 as $T$ tends to infinity.

%\begin{equation}
%    \lim_{T \rightarrow \infty}RE(\mathbf{x}_{T}\: || \: \mathbf{x}^{*}) = 0
%\end{equation}

%\section{Algorithm Overview}
\par Next, we shall describe our approach for ensuring that the column player is simultaneously successful in achieving sublinear regret and guiding the column player into an appropriate final iterate. Firstly, we shall choose $A$ so that $\mathbf{x}^{*}$ is the only minimax strategy available to the row player. In the section that follows, we will describe a computationally cheap method for finding such a matrix.

\par In order to guide the row player, the column player will play strategies according to an algorithm which guarantees last iterate convergence to minmax equilibria in the repeated zero-sum setting, where the row player follows a no-regret algorithm to choose her strategy at each time step. As the only minmax strategy available to the row player is $\mathbf{x}^{*}$, the last iterate will inevitably converge to $\mathbf{x}^{*}$. In particular, the column player will select mixed strategies according to the LRCA algorithm, however any algorithm which guarantees last iterate convergence to minimax equilibria would suffice. In Section~\ref{section: last round convergence} we provide a detailed description of LRCA. In addition, to guaranteeing last iterate convergence to minimax equilibria, LRCA also achieves sublinear regret with respect to any sequence played by the row player, thus also satisfying the criteria of the column player with respect to accumulating high payoff.

%\textcolor{red}{Nick: I think I need to make a better case here for why changing to zero-sum is justified - i.e. it would make last-iterate convergence impossible given SOTA results as of now.}

%\textcolor{red}{Nick: TODO: Mention that we make the assumption that the other player is playing no regret.}

\section{Designing Games with a Unique Minimax Solution}

\par We will now describe how we may construct a matrix $A$ which describes a zero-sum game with unique minimax strategy $\mathbf{x}^{*}$ for the row player. Firstly, we select a minimax strategy, $\mathbf{y}^{*}$, for the column player. The only constraint we place on $\mathbf{y}^{*}$ is that its support must be greater than equal to the support of $\mathbf{x}^{*}$. For if this is not the case, then it is provably impossible to construct a matrix $A$ with minimax equilibrium $(\mathbf{x}^{*}, \mathbf{y}^{*})$ whilst guaranteeing the uniqueness of the minimax strategy $\mathbf{x}^{*}$ \cite{bohnenblust1950}. From now on, without loss of generality, we shall assume that for any strategy with support $k$, that the first $k$ entries are nonzero.

\par Once $\mathbf{y}^{*}$ has been selected, we proceed directly with the construction of the matrix $A$. If the support of $\mathbf{y}^{*}$ is equal to the support of $\mathbf{x}^{*}$, then $A$ is constructed in accordance with Theorem \ref{First theorem of constructing matrix A}. On the other hand, if the support of $\mathbf{y}^{*}$ is greater than the support of $\mathbf{x}^{*}$, then we construct $A$ according to Theorem \ref{theorem: unique for row player}. Note that, in both theorems, $\mathbf{y}^{*}$ is not necessarily the unique minimax strategy for the column player, whilst $\mathbf{x}^{*}$ is the unique minimax strategy for the row player.

The core idea behind both theorems is to construct three systems of inequalities using linear programming theory. The satisfaction of the first two of these three systems guarantees that $\mathbf{x}^{*}$ and $\mathbf{y}^{*}$ are minimax strategies for a given matrix $A$. Meanwhile, the infeasibility of the third system guarantees the uniqueness of $\mathbf{x}^{*}$ with respect to a given matrix $A$. Thus, the task of finding a matrix $A$ with the desired properties corresponds to finding a matrix which satisfies the first two systems whilst guaranteeing infeasibility of the third.

The first system is derived by expressing the problem of finding minimax equilibria for a given matrix $A$ as the LP (\ref{xlp}). Note that the solution set of LP (\ref{xlp}) consists of the pairs $(v, \mathbf{x})$ which satisfy the Karush-Kuhn-Tucker (KKT) system for LP (\ref{xlp}), that is, the $(v, \mathbf{x})$ which satisfy the following system of inequalities:
%While the technical details of the main results (Theorems~\ref{First theorem of %constructing matrix A} and~\ref{theorem: unique for row player}) are deferred to %the appendix due to space limitations, in what follows we discuss the core idea %of our method, which forms the basis for the constructions of a matrix $A$ with %the desired property.
%

%
%\begin{equation}
%\label{xlp}
%\begin{aligned}
%    \min_{v, \; \mathbf{x}} \quad & v \\
%    \text{s.t. \quad} & v - \sum_{i=1}^{n}A_{ij}\mathbf{x}_{i} \geq 0 \quad & \forall j \in [m]\\
%    & \sum_{i=1}^{n} \mathbf{x}_{i} = 1 \\
%    & \mathbf{x}_{i} \geq 0 \quad &\forall i \in [n]
%\end{aligned}
%\end{equation}
%
\begin{equation}
\label{ineqs}
    \begin{aligned}
        &v - \sum_{i=1}^{n}A_{ij}\mathbf{x}_{i} \geq 0  \quad &\forall j \in [m] \\
        &\sum_{i=1}^{n} \mathbf{x}_{i} = 1, \quad %\\
        \mathbf{x}_{i} \geq 0  \quad &\forall i \in [n]\\
        &\mathbf{y}_{j}^{*}(v - \sum_{i=1}^{n}A_{ij}\mathbf{x}_{i}) = 0 \quad &\forall j \in [m]
    \end{aligned}
\end{equation}

\par Thus, to ensure that the row player has a minimax strategy $\mathbf{x}^{*}$, we must choose a matrix $A$ such that $\mathbf{x}^{*}$ satisfies the system of inequalities (\ref{ineqs}). Assuming that $\mathbf{x}^{*}$ is a feasible mixed strategy, the inequalities which pertain to the matrix $A$ are therefore:

\begin{equation}
\label{feas1}
\begin{aligned}
    v \geq \sum_{i=1}^{n}A_{ij}\mathbf{x}^{*}_{i}  \quad \forall j \in [m], \quad \quad %\\
    v = \sum_{i=1}^{n}A_{ij}\mathbf{x}^{*}_{i} \quad \forall j \text{ where } \mathbf{y}^{*}_{j} > 0
\end{aligned}
\end{equation}

%\par Thus, if a matrix $A$ satisfies system (\ref{feas1}), then we know that $\mathbf{x}^{*}$ is a minimax strategy for the row player. Note that system (\ref{feas1}) makes no guarantees with respect to the uniqueness of $\mathbf{x}^{*}$. To address this problem, we construct another system of inequalities. The infeasibility of this new system of inequalities in combination with satisfaction of system (\ref{feas1}) will be sufficient to guarantee that $\mathbf{x}^{*}$ is the unique solution of the chosen matrix $A$. To construct this new system of inequalities, we leverage the following theorem concerning the uniqueness of linear programming solutions:

%\par Whilst the conditions above guarantee that $\mathbf{x}^{*}$ is a solution, %they make no guarantees regarding uniqueness. We will leverage the following %theorem to establish conditions that guarantee $\mathbf{x}^{*}$ is the unique %minimax strategy for the row player:

The second system of inequalities is constructed in a similar fashion, instead using the version of LP (\ref{xlp}) corresponding to the column player:
\begin{equation}
\label{feas2}
\begin{aligned}
    v \leq \sum_{i=1}^{n}A_{ij}\mathbf{x}^{*}_{i}  \quad \forall i \in [n], \quad \quad %\\
    v = \sum_{j=1}^{m}A_{ij}\mathbf{y}_{j}^{*} \quad \forall i \text{ where } \mathbf{x}^{*}_{i} > 0
\end{aligned}
\end{equation}

Thus, if a matrix $A$ satisfies this second system of inequalities, then $\mathbf{y}^{*}$ is a minimax strategy for the column player. In contrast, the third system of inequalities is constructed by leveraging the following theorem concerning uniqueness of LP solutions:

\begin{theorem} [Theorem 2 from ~\cite{mangasarian1979uniqueness}]
\label{uniquethm}
Let $\mathbf{x}^{*}$ be a solution to the following LP:
\begin{equation}
    \label{thmeqn}
    \begin{aligned}
    \min_{\mathbf{x}} \:\mathbf{p}^{T}\mathbf{x} \quad \quad %\\
    \text{s.t. \quad} A\mathbf{x} = \mathbf{b}, \quad %\\
    C\mathbf{x} \geq \mathbf{d}
    \end{aligned}
\end{equation}
then $\mathbf{x}^{*}$ is a unique solution if and only if there exists no $\mathbf{x}$ satisfying
%\begin{equation*}
   $A\mathbf{x} = 0$, $C_{J}\mathbf{x} \geq 0$, $\mathbf{p}^{T}\mathbf{x} \leq 0$, $\mathbf{x} \neq 0$,
%\end{equation*}
where $J$ is the index set of tight inequality constraints, i.e.,
%\begin{equation*}
  $J = \{i \:|\:C_{i}\mathbf{x}^{*} = \mathbf{d}_{i}\}$.
%\end{equation*}
\end{theorem}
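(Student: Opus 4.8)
The plan is to establish the biconditional by proving each direction in contrapositive form, in both cases manipulating a single \emph{direction vector} (the object written $\mathbf{x}$ in the theorem statement, which I will rename $\mathbf{z}$ to avoid clashing with the LP variable). Write $v^{*} = \mathbf{p}^{T}\mathbf{x}^{*}$ for the optimal value, $\mathcal{F} = \{\mathbf{x} : A\mathbf{x} = \mathbf{b},\ C\mathbf{x} \geq \mathbf{d}\}$ for the feasible polyhedron, and recall that $J = \{i : C_{i}\mathbf{x}^{*} = \mathbf{d}_{i}\}$ indexes the constraints active at $\mathbf{x}^{*}$, so $C_{i}\mathbf{x}^{*} > \mathbf{d}_{i}$ strictly for each $i \notin J$. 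The only structural facts I will use are linearity of $\mathbf{x} \mapsto (A\mathbf{x}, C\mathbf{x}, \mathbf{p}^{T}\mathbf{x})$ and convexity of $\mathcal{F}$.

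\emph{Non-uniqueness $\Rightarrow$ existence of $\mathbf{z}$.} Suppose $\mathbf{x}^{*}$ is not the unique optimizer; pick another optimal point $\mathbf{x}' \in \mathcal{F}$ with $\mathbf{x}' \neq \mathbf{x}^{*}$ and set $\mathbf{z} = \mathbf{x}' - \mathbf{x}^{*} \neq 0$. Then $A\mathbf{z} = \mathbf{b} - \mathbf{b} = 0$; for $i \in J$ we get $C_{i}\mathbf{z} = C_{i}\mathbf{x}' - C_{i}\mathbf{x}^{*} = C_{i}\mathbf{x}' - \mathbf{d}_{i} \geq 0$ by feasibility of $\mathbf{x}'$; and $\mathbf{p}^{T}\mathbf{z} = v^{*} - v^{*} = 0 \leq 0$ since both points attain the optimum. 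Hence $\mathbf{z}$ satisfies $A\mathbf{z}=0$, $C_{J}\mathbf{z}\geq 0$, $\mathbf{p}^{T}\mathbf{z}\leq 0$, $\mathbf{z}\neq 0$, which is the contrapositive of the implication (no such $\mathbf{z}$) $\Rightarrow$ ($\mathbf{x}^{*}$ unique).

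\emph{Existence of $\mathbf{z}$ $\Rightarrow$ non-uniqueness.} Conversely, suppose such a $\mathbf{z} \neq 0$ exists, and take a small step: $\mathbf{x}(t) = \mathbf{x}^{*} + t\mathbf{z}$ with $t > 0$. The equality block holds for all $t$, since $A\mathbf{x}(t) = \mathbf{b} + t\cdot 0 = \mathbf{b}$; the active inequalities hold for all $t \geq 0$, since $C_{i}\mathbf{x}(t) = \mathbf{d}_{i} + tC_{i}\mathbf{z} \geq \mathbf{d}_{i}$ for $i \in J$; and each inactive inequality, having strict slack $\varepsilon_{i} := C_{i}\mathbf{x}^{*} - \mathbf{d}_{i} > 0$, is preserved for all $t$ below a strictly positive threshold, since only the finitely many $i \notin J$ with $C_{i}\mathbf{z} < 0$ impose any bound, namely $t \leq \varepsilon_{i}/(-C_{i}\mathbf{z})$. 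Thus $\mathbf{x}(t) \in \mathcal{F}$ for all sufficiently small $t > 0$. Its objective is $\mathbf{p}^{T}\mathbf{x}(t) = v^{*} + t\,\mathbf{p}^{T}\mathbf{z} \leq v^{*}$, and optimality of $v^{*}$ forces equality, so $\mathbf{x}(t)$ is also optimal; since $t>0$ and $\mathbf{z}\neq 0$ it differs from $\mathbf{x}^{*}$. This is the contrapositive of the implication ($\mathbf{x}^{*}$ unique) $\Rightarrow$ (no such $\mathbf{z}$), and combined with the previous paragraph it yields the stated equivalence.

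\emph{Anticipated obstacle.} Once the direction-vector viewpoint is adopted the argument is essentially linear-algebraic, so I expect the only genuinely delicate point to be the step-size estimate for the inactive constraints: one must use that there are \emph{finitely many} constraints to extract a single positive $t$ keeping all of them satisfied, and must be careful that activity is measured at $\mathbf{x}^{*}$ — this is precisely why only $C_{J}\mathbf{z}\geq 0$, rather than $C\mathbf{z}\geq 0$, is needed, since a sufficiently small perturbation cannot break a strictly slack constraint. A secondary nuisance is keeping the contrapositive bookkeeping straight across the two directions. Conceptually, the statement is the observation that the optimizer set of the LP is the polyhedral face $\mathcal{F} \cap \{\mathbf{x} : \mathbf{p}^{T}\mathbf{x} = v^{*}\}$, and $\mathbf{x}^{*}$ is its unique element exactly when this face admits no nonzero feasible improving-or-neutral direction at $\mathbf{x}^{*}$, which is the displayed system.
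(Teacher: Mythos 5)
Your proof is correct. Note, however, that the paper does not prove this statement at all: it is imported verbatim as Theorem~2 of Mangasarian (1979) and used as a black box, so there is no in-paper argument to compare against. Your two-directional feasible-direction argument --- taking $\mathbf{z}=\mathbf{x}'-\mathbf{x}^{*}$ for the ``only if'' part, and perturbing along $\mathbf{z}$ with a step size small enough to preserve the strictly slack constraints for the ``if'' part --- is the standard proof of this characterisation, and you handle the one delicate point correctly (optimality of $\mathbf{x}^{*}$ forces $\mathbf{p}^{T}\mathbf{z}=0$ for any admissible direction, so the perturbed point is again optimal rather than strictly better).
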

 
\par Assume, for the sake of analysis, that $(v, \mathbf{x}^{*})$ is a solution of LP (\ref{xlp}). Then the corresponding version of system (\ref{thmeqn}) for LP (\ref{xlp}) is:

%\par Applying Theorem~\ref{uniquethm} to (\ref{xlp}), we observe that for %$\mathbf{x}^{*}$ to be the unique solution to (\ref{xlp}), there must be no %solution to the following system of inequalities in which $\hat{v}$ and %$\mathbf{x}$ are variables:

\begin{equation}
\label{uniquesystem}
\begin{aligned}
&\mathbf{x}_{i} \geq 0 \quad \forall i \text{ where } \mathbf{x}^{*}_{i} = 0, \quad
\hat{v} - \sum^{n}_{i=1}A_{ij}\mathbf{x}_{i} \geq 0 \quad \forall j \text{ where } \mathbf{y}^{*}_{j} > 0, \\
&\sum_{i=1}^{n}\mathbf{x}_{i} = 0,\quad 
\hat{v} \leq 0, \quad
(\hat{v}, \mathbf{x}) \neq 0
\end{aligned}
\end{equation}

\par where $\mathbf{x}$ and $\hat{v}$ are variables. According to Theorem \ref{uniquethm}, if a chosen matrix $A$ guarantees the infeasibility of system (\ref{uniquesystem}) whilst satisfying system (\ref{feas1}) and the corresponding system for the column player, then $\mathbf{x}^{*}$ must be the unique minimax strategy for the row player. We now can prove the following lemma:

\begin{lemma}
\label{lem}
Let $A$ be a matrix with the a minimax solutions $(\mathbf{x}^{*}, \mathbf{y}^{*})$. Then $\mathbf{x}^{*}$ is the unique minimax strategy for the row player if, for every $k \in n$ such that $\mathbf{x}^{*}_{k} > 0$ then there exists a column $j$ of $A$ such that:
\begin{equation*}
    A_{ij} =
    \begin{cases}
        \alpha& \mathbf{x}^{*}_{i} = 0  \\
        \gamma & i = k \\
        \beta & \text{otherwise}
    \end{cases}
\end{equation*}
where $\alpha > \gamma > 0$ and $\beta > \gamma$.
\end{lemma}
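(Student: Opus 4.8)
The plan is to reduce the claim to an infeasibility statement and then exhibit that infeasibility using the prescribed columns. Since $(\mathbf{x}^{*},\mathbf{y}^{*})$ is assumed to be a minimax equilibrium, $(v,\mathbf{x}^{*})$ solves LP~(\ref{xlp}) and systems~(\ref{feas1}) and~(\ref{feas2}) hold; by Theorem~\ref{uniquethm}, applied to LP~(\ref{xlp}) exactly as in the discussion preceding the lemma, it therefore suffices to show that system~(\ref{uniquesystem}) has no solution. I would argue by contradiction: suppose $(\hat v,\mathbf{d})$ is a solution of~(\ref{uniquesystem}) (I rename the vector variable $\mathbf{x}$ there to $\mathbf{d}$), and set $Z=\{i:\mathbf{x}^{*}_{i}=0\}$, $S=\{i:\mathbf{x}^{*}_{i}>0\}$, $\mathbf{d}_{Z}=\sum_{i\in Z}\mathbf{d}_{i}$. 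The constraints immediately give $\hat v\leq 0$, $\sum_{i}\mathbf{d}_{i}=0$, $\mathbf{d}_{i}\geq 0$ for $i\in Z$ (so $\mathbf{d}_{Z}\geq 0$), $(\hat v,\mathbf{d})\neq 0$, and $\hat v\geq\sum_{i}A_{ij}\mathbf{d}_{i}$ for every column $j$ with $\mathbf{y}^{*}_{j}>0$; note also that if $\mathbf{d}=0$ then $\hat v=0$ and $(\hat v,\mathbf{d})=0$, so in fact $\mathbf{d}\neq 0$.

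Next, fix some $k\in S$ and take the column $j=j(k)$ guaranteed by hypothesis (which is a best-response column, hence one of the columns appearing in~(\ref{uniquesystem})). Evaluating $\sum_{i}A_{ij(k)}\mathbf{d}_{i}$ with the stated pattern, splitting the sum over $Z$, $\{k\}$, and $S\setminus\{k\}$, and using $\sum_{i\in S}\mathbf{d}_{i}=-\mathbf{d}_{Z}$ collapses it to
\[
\sum_{i}A_{ij(k)}\mathbf{d}_{i}=(\alpha-\beta)\,\mathbf{d}_{Z}+(\gamma-\beta)\,\mathbf{d}_{k}.
\]
Combining with $\sum_{i}A_{ij(k)}\mathbf{d}_{i}\leq\hat v\leq 0$ and $\beta>\gamma$ (so $\gamma-\beta<0$) gives, for every $k\in S$,
\[
\mathbf{d}_{k}\;\geq\;\frac{\alpha-\beta}{\beta-\gamma}\,\mathbf{d}_{Z}.
\]

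Summing the last inequality over $k\in S$ and using $\sum_{k\in S}\mathbf{d}_{k}=-\mathbf{d}_{Z}$ produces $(|S|\alpha-(|S|-1)\beta-\gamma)\,\mathbf{d}_{Z}\leq 0$; since $\mathbf{d}_{Z}\geq 0$ and the coefficient is positive — it equals $\alpha-\gamma>0$ when $|S|=1$, and is nonnegative whenever $\alpha\geq\beta$, which the construction of $A$ can be taken to ensure — we conclude $\mathbf{d}_{Z}=0$. Substituting $\mathbf{d}_{Z}=0$ back shows $\mathbf{d}_{i}=0$ for $i\in Z$ and $(\gamma-\beta)\mathbf{d}_{k}\leq\hat v\leq 0$, hence $\mathbf{d}_{k}\geq 0$ for every $k\in S$; but then $\mathbf{d}\geq 0$ together with $\sum_{i}\mathbf{d}_{i}=0$ forces $\mathbf{d}=0$, contradicting $\mathbf{d}\neq 0$. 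Thus~(\ref{uniquesystem}) is infeasible and $\mathbf{x}^{*}$ is the unique minimax strategy of the row player.

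Two places call for care. The conceptual one is ensuring that each structured column $j(k)$ is genuinely active in system~(\ref{uniquesystem}), i.e. that it is a best response to $\mathbf{x}^{*}$ (equivalently, that one may choose the minimax strategy $\mathbf{y}^{*}$ with $\mathbf{y}^{*}_{j(k)}>0$, which uses the standing assumption $\mathrm{supp}(\mathbf{y}^{*})\supseteq\mathrm{supp}(\mathbf{x}^{*})$ and the freedom in building $A$); alternatively one works with the full tight-constraint index set of Theorem~\ref{uniquethm}, for which the same tightness must be checked. The technical one is the final sign step: the cancellation only forces $\mathbf{d}_{Z}=0$ when $|S|\alpha-(|S|-1)\beta-\gamma>0$, which is automatic for singleton support but for larger support needs $\alpha$ large relative to $\beta$ — a mild condition worth recording alongside $\alpha>\gamma>0$ and $\beta>\gamma$. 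Everything else is a routine simplification.
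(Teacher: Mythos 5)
Your reduction---invoking Theorem~\ref{uniquethm} and showing that system~(\ref{uniquesystem}) is infeasible---is exactly the paper's. The execution differs: the paper isolates a single coordinate $k$ with $\mathbf{d}_{k}<0$, normalises the remaining entries by $\sum_{i\neq k}\mathbf{d}_{i}>0$, and contradicts the one inequality coming from the column $j(k)$, whereas you extract the bound $\mathbf{d}_{k}\geq\frac{\alpha-\beta}{\beta-\gamma}\mathbf{d}_{Z}$ for \emph{every} $k$ in the support and aggregate. Your algebra is correct, but the condition you flag, $|S|\alpha-(|S|-1)\beta-\gamma>0$, is a genuine gap: it is not implied by $\alpha>\gamma>0$ and $\beta>\gamma$, so as written you have proved a strictly weaker statement than the lemma. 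That said, your instinct that some such condition cannot be dispensed with is right, because the lemma as stated is false. Take $A=\bigl(\begin{smallmatrix}1&3\\3&1\\2&2\end{smallmatrix}\bigr)$ with $\mathbf{x}^{*}=(\tfrac12,\tfrac12,0)^{T}$, $\mathbf{y}^{*}=(\tfrac12,\tfrac12)^{T}$, $v=2$: this is a minimax equilibrium, column $1$ (resp.\ $2$) has the required pattern for $k=1$ (resp.\ $k=2$) with $\alpha=2>\gamma=1>0$ and $\beta=3>\gamma$, yet $\mathbf{e}_{3}$ is also a minimax strategy for the row player. Here $2\alpha-\beta-\gamma=0$, exactly your boundary case. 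The paper's own proof slips at the matching step: it asserts that $\tilde{x}=\sum_{i\neq k,\,\mathbf{x}^{*}_{i}=0}\mathbf{x}^{\prime}_{i}$ is less than $1$, which fails when the candidate solution has more than one negative coordinate (in the example, taking $k=1$ and $\mathbf{d}=(-1,-1,2)$ gives $\tilde{x}=2$ and $\tilde{x}\alpha+(1-\tilde{x})\beta=\gamma$ rather than $>\gamma$).

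Two cautions about the repair you propose. First, $\alpha\geq\beta$ does close both your argument and the paper's (then $\tilde{x}\alpha+(1-\tilde{x})\beta\geq\beta>\gamma$ for all $\tilde{x}\geq0$), but it is violated by the paper's own constructions in Theorems~\ref{First theorem of constructing matrix A} and~\ref{theorem: unique for row player}, where the out-of-support rows carry $\alpha=\alpha_{j}-z<\alpha_{j}=\beta$; with those column-dependent constants one computes $\frac{\alpha-\beta}{\beta-\gamma}=-\mathbf{y}_{j}$, so your aggregated coefficient equals $1-\sum_{j}\mathbf{y}_{j}=0$ and the summed inequality is vacuous. Hence the aggregation route cannot certify those constructions without additional input, and any corrected lemma must carry a hypothesis they actually satisfy. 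Second, two points you already half-note are real requirements missing from the statement: the structured column must satisfy $\mathbf{y}^{*}_{j(k)}>0$ (or at least be tight at $\mathbf{x}^{*}$) for its constraint to appear in~(\ref{uniquesystem}), and your summation tacitly assumes $(\alpha,\beta,\gamma)$ are uniform over $k$, whereas the intended applications use per-column constants. None of this invalidates your write-up as a proof of a suitably amended lemma, but the statement you end up proving is not the one in the paper.
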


\begin{proof}
Let $A$ be a matrix as described above. Since $(\mathbf{x}^{*}, \mathbf{y}^{*})$ is a minimax solution, proving that $\mathbf{x}^{*}$ is the unique minimax strategy for the row player corresponds to showing that system (\ref{uniquesystem}) is infeasible given the matrix $A$.

Firstly, observe that if there is a solution to system (\ref{uniquesystem}) with $\hat{v} < 0$, then there must also be a solution  with $\hat{v} = 0$. Thus, without loss of generality, we can assume that $\hat{v} = 0$ and rewrite the system as follows:
\begin{equation}
\label{simpleunique}
\begin{aligned}
& \mathbf{x}_{i} \geq 0 \quad \forall i \text{ where } \mathbf{x}^{*}_{i} = 0, \quad
\sum^{n}_{i=1}A_{ij}\mathbf{x}_{i} \leq 0 \quad \forall j \text{ where } \mathbf{y}^{*}_{j} > 0,
\sum_{i=1}^{n}\mathbf{x}_{i} = 0, \quad 
\mathbf{x} \neq 0
\end{aligned}
\end{equation}
Now, consider a candidate solution,  $\mathbf{x} \in \mathbb{R}^{n}$, to system (\ref{uniquesystem}). Such a vector must be nonzero, and its elements must sum to zero. In addition, we know that for any $i$ such that $\mathbf{x}^{*}_{i} = 0$, that $\mathbf{x}_{i} \geq 0$. Since $\mathbf{x}$ is not the zero vector and its elements sum to zero, there must be one element of $\mathbf{x}$ which is negative, $\mathbf{x}_{k} < 0$. Furthermore, note that $\mathbf{x}_{k}^{*}$ must be positive.  

Rewriting the second and third conditions of system (\ref{simpleunique}) in term of $\mathbf{x}_{k}$, we have:
\begin{equation*}
    \begin{aligned}
        &\mathbf{x}_{k}= -\sum_{i\neq k}^{n}\mathbf{x}_{i}, \quad
        \sum^{n}_{i \neq k}A_{ij}\mathbf{x}_{i} \leq -A_{kj}\mathbf{x}_{k} \: \forall j \text{ where } \mathbf{y}^{*}_{j} > 0
    \end{aligned}
\end{equation*}
 Using the equality, we can rewrite each inequality as follows:
\begin{equation*}
    \sum^{n}_{i \neq k}A_{ij}\mathbf{x}_{i} \leq A_{kj}\sum^{n}_{i \neq k}\mathbf{x}_{i}, \:\forall j  \text{ where }  \mathbf{y}^{*}_{j} > 0.
\end{equation*}
Dividing both sides by $\sum^{n}_{i \neq k}\mathbf{x}_{i}$, we obtain the following set of inequalities:
\begin{equation}
    \label{violate}
    \sum^{n}_{i \neq k}A_{ij}\mathbf{x}^{\prime}_{i} \leq A_{kj} \quad \forall j \text{ where } \mathbf{y}^{*}_{j} > 0
\end{equation}
where
    $\mathbf{x}_{i}^{\prime} = \frac{\mathbf{x}_{i}}{\sum_{i \neq k}^{n}\mathbf{x}_{i}}$ 
is the normalisation of the entry $\mathbf{x}_{i}$ by the sum of all entries, excluding $\mathbf{x}_{k}$. 
 Moreover, let 
\begin{equation*}
  \tilde{x} = \sum_{i \neq k, \: \mathbf{x}_{i}^{*} = 0}\mathbf{x}^{\prime}_{i}
\end{equation*}
denote the summation over normalised entries $\mathbf{x}^{\prime}_{i}$ for which $\mathbf{x}^{*}_{i} = 0$, again excluding $\mathbf{x}^{\prime}_k$. As $\mathbf{x}_{k} < 0$, we must have $\sum^{n}_{i \neq k}\mathbf{x}_{i} > 0$. Therefore, $\mathbf{x}_{i}^{\prime} \geq 0$ for all $i$ where $\mathbf{x}^{*}_{i} = 0$, and thus, $\tilde{x}$ is nonnegative and less than 1. Note that, by definition, there is a column $j$ of $A$ such that
\begin{equation*}
 \sum^{n}_{i \neq k}A_{ij}\mathbf{x}^{\prime}_{i} = \tilde{x}\alpha + (1 - \tilde{x})\beta > \tilde{x} \gamma + (1 - \tilde{x})\gamma = \gamma  = A_{kj}
\end{equation*}
However, this contradicts the inequalities (\ref{violate}), and thus, $\mathbf{x}$ cannot be a solution of system (\ref{simpleunique}). Therefore, system (\ref{simpleunique}) has no solutions, implying that system (\ref{uniquesystem}) also has no solutions. Hence, $\mathbf{x}^{*}$ is the unique minimax strategy for the row player.
\end{proof}

\begin{theorem}\label{First theorem of constructing matrix A}
Let $\mathbf{x} \in \Delta_n$, $\mathbf{y} \in \Delta_m$ such that $support(\mathbf{x})=support(\mathbf{y})=k$. Define matrix $A$ as:
\[A=\begin{bmatrix}
    a_1 & \alpha_2 & ... & \alpha_k & v &... & v \\
    \alpha_1 & a_2 & ... & \alpha_k & v &... & v \\
    ...&...&...&...&...&...&... \\
    \alpha_1 &\alpha_2 &...& a_k & v &... & v \\
    \alpha_1-z &\alpha_2 -z &...& \alpha_k-z & v &... & v \\
    ...&...&...&...&...&...&... \\
    \alpha_1-z &\alpha_2 -z &...& \alpha_k-z & v &... & v
    
  \end{bmatrix}
\]
where $v >0$ and
$0 < z < \min_{i \in [k]}(v, \frac{v\mathbf{y}_i}{1-\mathbf{x}_i}); \; \alpha_i = v+ z \frac{\mathbf{x}_i}{\mathbf{y}_i}; \; a_i= v - z \frac{1-\mathbf{x}_i}{\mathbf{y}_i}\;\forall i \in [k]$.
Then $\mathbf{x}$ is the unique minimax equilibrium of the row player in the zero-sum game described by the matrix $A$.
\end{theorem}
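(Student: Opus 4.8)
I would prove the statement in two moves matching the hypotheses of Lemma~\ref{lem}: (i) show that $(\mathbf{x},\mathbf{y})$ is a minimax equilibrium of the game $A$ with value $v$; and (ii) show that, for each coordinate with $\mathbf{x}_i>0$, the matrix $A$ contains a column of the exact shape that Lemma~\ref{lem} requires. The lemma then delivers uniqueness of $\mathbf{x}$ for free. Throughout I assume the support size $k\ge 2$, so that $\mathbf{x}_i<1$ and $\mathbf{y}_i<1$ on the support; when $k=1$ the formulae degenerate to $\alpha_i-z=a_i=v$, so $A$ is the constant matrix $v$ and that case is set aside.

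For (i), I would check directly that the row player's strategy $\mathbf{x}$ secures expected payoff exactly $v$ against every column and that the column player's strategy $\mathbf{y}$ secures exactly $v$ against every row; by the minimax theorem this pins the value at $v$ and certifies $(\mathbf{x},\mathbf{y})$ as an equilibrium. Concretely, for $j\in[k]$ one has $(A^{T}\mathbf{x})_j=a_j\mathbf{x}_j+\alpha_j(1-\mathbf{x}_j)$, and the definitions $\alpha_j=v+z\mathbf{x}_j/\mathbf{y}_j$, $a_j=v-z(1-\mathbf{x}_j)/\mathbf{y}_j$ are engineered precisely so that the $z$-terms cancel and this equals $v$; for $j>k$ the column is constant $v$. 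Symmetrically, using $\sum_{j\in[k]}\alpha_j\mathbf{y}_j=\sum_{j\in[k]}(v\mathbf{y}_j+z\mathbf{x}_j)=v+z$ one finds $(A\mathbf{y})_i=v$ for every row $i$ (both $i\in[k]$ and $i>k$). Hence $\max_j(A^{T}\mathbf{x})_j=v=\min_i(A\mathbf{y})_i$, so $v$ is the value and $(\mathbf{x},\mathbf{y})$ is a minimax equilibrium.

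For (ii), since $\mathbf{x}_i>0$ exactly for $i\in[k]$, I claim that for each such $k'$ the $k'$-th column of $A$ is the required witness. Reading down that column, the entry is $a_{k'}$ at row $k'$, it is $\alpha_{k'}$ at the remaining rows of $[k]$, and it is $\alpha_{k'}-z$ at every row $i>k$ --- i.e.\ exactly at the rows with $\mathbf{x}_i=0$. In the notation of Lemma~\ref{lem} this is $\gamma=a_{k'}$, $\beta=\alpha_{k'}$, $\alpha=\alpha_{k'}-z$, and the lemma's requirements $\alpha > \gamma > 0$ and $\beta > \gamma$ follow from the constraints on $z$: since $\alpha_{k'}-a_{k'}=z/\mathbf{y}_{k'}$ we get $\beta-\gamma=z/\mathbf{y}_{k'}>0$ and $\alpha-\gamma=z(1-\mathbf{y}_{k'})/\mathbf{y}_{k'}>0$ (using $\mathbf{y}_{k'}<1$), while $\gamma=a_{k'}>0$ is exactly the assumption $z<v\mathbf{y}_{k'}/(1-\mathbf{x}_{k'})$. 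Applying Lemma~\ref{lem} then gives that $\mathbf{x}$ is the unique minimax strategy for the row player.

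The computation is routine; the only places needing care are the cancellation in step (i) --- verifying that every coordinate of $A^{T}\mathbf{x}$ and of $A\mathbf{y}$ collapses to $v$ --- lining the bound on $z$ up exactly with the inequalities $\alpha > \gamma > 0$ and $\beta > \gamma$ of Lemma~\ref{lem}, and keeping the theorem's entry names $\alpha_i,a_i$ separate from the lemma's values $\alpha,\beta,\gamma$, plus noting the tacit assumption $k\ge 2$. I do not expect any genuine obstacle beyond this bookkeeping.
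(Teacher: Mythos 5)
Your proposal is correct and follows essentially the same route as the paper: verify $A^{T}\mathbf{x}=v\mathbf{1}$ and $A\mathbf{y}=v\mathbf{1}$ so that $(\mathbf{x},\mathbf{y})$ is a minimax pair with value $v$, then invoke Lemma~\ref{lem} with the first $k$ columns as witnesses (the paper derives the formulae for $\alpha_i,a_i$ from these conditions and splits into a full-support case plus a block extension, whereas you verify the given formulae directly, but this is only a presentational difference). Your explicit handling of the degenerate $k=1$ case and of the inequality $\alpha_{k'}-z>a_{k'}$ requiring $\mathbf{y}_{k'}<1$ is a small point of care that the paper's proof passes over.
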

\begin{proof}
We will prove this theorem by considering two cases. In one case, we will assume that $m=n$ and $support(\mathbf{x})=support(\mathbf{y})=n$. In the other case, we will assume that  $support(\mathbf{x})=support(\mathbf{y})=k$ with $k \leq \min(n,m)$. Proving that the theorem holds in both cases yields the result.

\par We first prove the theorem in the case where $n=m$ and $support(\mathbf{x})=support(\mathbf{y})=n$. Consider the following matrix:
\[
A=
  \begin{bmatrix}
    a_1 & \alpha_2 & ... & \alpha_n \\
    \alpha_1 & a_2 & ... & \alpha_n \\
    ...&...&...&... \\
    \alpha_1 &\alpha_2 &...& a_n
  \end{bmatrix}
\]
By Lemma \ref{lem}, we know that the row player has a unique minimax strategy if $a_{i} < \alpha_{i}$ for all $i \in [n]$. Thus if $\alpha_{i}$ and $a_{i}$ are chosen so that $(\mathbf{x}, \mathbf{y})$ is a minimax equilibrium, whilst respecting the aforementioned conditions, then the proof is complete. Ensuring that $(\mathbf{x}, \mathbf{y})$ is a minimax equilibrium is equivalent to ensuring that $A$ satisfies systems (\ref{feas1}) and (\ref{feas2}). 

Since the support in this case is maximal, system (\ref{feas1}) simplifies to:
\begin{equation}\label{proof 1 Nip2020 1st equation}
 \mathbf{x}^TA= \left[v,v,\dots, v\right]   
\end{equation}
and system (\ref{feas2}) simplifies to:
\begin{equation}\label{proof 1 Nip2020 2nd equation}
A\mathbf{y}= \left[v,v,\dots, v\right]^T
\end{equation}
Note that equality (\ref{proof 1 Nip2020 1st equation})  is equivalent to
\begin{equation}
    \mathbf{x}_1(a_1-\alpha_1)+\alpha_1= \mathbf{x}_2(a_2-\alpha_2)+\alpha_2= \dots = \mathbf{x}_n(a_n-\alpha_n)+\alpha_n=v \label{proof 1 Nip2020 3nd equation}.
\end{equation}
%\begin{subequations}
%\begin{align}
    %&x_1a_1+x_2\alpha_1+\dots+x_n\alpha_1=v \nonumber \\
    %&x_1\alpha_2+x_2a_2+\dots+ x_n\alpha_2=v \nonumber\\
    %&\dots  \nonumber\\
    %&x_1\alpha_n+x_2\alpha_n+\dots+x_n\alpha_n=v \nonumber \\
    %&\iff
    %\mathbf{x}_1(a_1-\alpha_1)+\alpha_1= %\mathbf{x}_2(a_2-\alpha_2)+\alpha_2= \dots = %\mathbf{x}_n(a_n-\alpha_n)+\alpha_n=v \label{proof 1 Nip2020 %3nd equation}.
%\end{align}
%\end{subequations}
Similarly, observe that equality (\ref{proof 1 Nip2020 2nd equation}) is equivalent to
\begin{equation}
    \mathbf{y}_1(\alpha_1-a_1) = \mathbf{y}_2(\alpha_2-a_2)=\dots =\mathbf{y}_n(\alpha_n-a_n)=z. \label{proof 1 Nip2020 4th equation}  
\end{equation}
where $z = v - (\alpha_{1}\mathbf{y}_{1} + \dots + \alpha_{n}\mathbf{y}_{n})$.
%\begin{subequations}
%\begin{align}
    %&y_1a_1+y_2\alpha_2+ \dots+y_n \alpha_n=v \nonumber\\
    %&y_1\alpha_1+y_2a_2+\dots+y_n\alpha_n=v \nonumber \\
    %& \dots \nonumber \\
    %&y_1\alpha_1 +y_2\alpha_2 +\dots +y_n a_n =v \nonumber\\ 
    %&  \mathbf{y}_1(\alpha_1-a_1) = %\mathbf{y}_2(\alpha_2-a_2)=\dots %=\mathbf{y}_n(\alpha_n-a_n)=z. \label{proof 1 Nip2020 4th %equation}  
%\end{align}
%\end{subequations}
Substituting the equalities (\ref{proof 1 Nip2020 4th equation}) into the equalities (\ref{proof 1 Nip2020 3nd equation}), we have the following formulae for each $\alpha_{i}$:
\[\alpha_i = v+ z \frac{\mathbf{x}_i}{\mathbf{y}_i} \quad \forall i \in [n].
\]
%\begin{align*}
    %\alpha_1-\mathbf{x}_1\frac{z}{\mathbf{y}_1}&=v\\
    %\alpha_2-\mathbf{x}_2\frac{z}{y_2}&=v\\
    %&\vdotswithin{=} \notag \\
    %\alpha_n-\mathbf{x}_n\frac{z}{\mathbf{y}_n}&=v.
%\end{align*}
By selecting $z >0$ we guarantee that $\alpha_{i} > a_{i}$ for all $i \in [n]$ (a direct consequence of the inequalities (\ref{proof 1 Nip2020 4th equation})). Moreover, selecting $z$ as follows guarantees $a_{i} > 0$ and $\alpha_{i} > 0$ for all $i \in [n]$:
\[
z < \min_{i \in [k]}\left(v, \frac{v\mathbf{y}_i}{1-\mathbf{x}_i}\right)\]

By back substituting each $\alpha_{i}$ into the equalities (\ref{proof 1 Nip2020 4th equation}), we obtain the following formulae for each $a_{i}$:
\[a_i= \alpha_i+\frac{v-\alpha_i}{\mathbf{x}_i} = v-z\frac{1-\mathbf{x}_i}{\mathbf{y}_i} >0  \quad \forall i \in [n]
\]

 Since every $a_{i}$ and $\alpha_{i}$ has been chosen to satisfy the conditions of Lemma \ref{lem}, whilst also guaranteeing that $(\mathbf{x}, \mathbf{y})$ is a minimax solution for the matrix $A$, it must be the case that $\mathbf{x}$ is the unique minimax strategy for the row player. Therefore, in the case $support(\mathbf{x})=support(\mathbf{y})=n=m$, the matrix A with the entries as described in the theorem statement will have $\mathbf{x}$ as the unique minimax strategy for the row player.

Next, we consider the case in which we have $support(\mathbf{x})=support(\mathbf{y})=k$ with $k \leq \min(n,m)$. By swapping around the rows and columns of matrix A, without loss of generality, we can assume that $\mathbf{x}$ and $\mathbf{y}$ have the following forms:
\begin{equation*}
    \begin{aligned}
    \mathbf{x}=\left[\mathbf{x}_1, \mathbf{x}_2,...,\mathbf{x}_k, 0, 0, ..., 0\right]^T \text{ where } \mathbf{x}_i >0 \quad \forall i \in [k] \\
    \mathbf{y}=\left[\mathbf{y}_1, \mathbf{y}_2,...,\mathbf{y}_k, 0, 0, ..., 0\right]^T \text{ where } \mathbf{y}_i > 0 \quad \forall i \in [k]
    \end{aligned}
\end{equation*}
With $\mathbf{x}'=\left[\mathbf{x}_1, \mathbf{x}_2,...,\mathbf{x}_k\right]^T,\; \mathbf{y}'=\left[\mathbf{y}_1, \mathbf{y}_2,...,\mathbf{y}_k\right]^T$, following the above method, we can construct a matrix $A'$ for which $\mathbf{x}'$ is the unique minimax strategy for the row player:
 \[A'=\begin{bmatrix}
    a_1 & \alpha_2 & ... & \alpha_k \\
    \alpha_1 & a_2 & ... & \alpha_k \\
    ...&...&...&... \\
    \alpha_1 &\alpha_2 &...& a_k
  \end{bmatrix}
\]
%Note that $a_i$ and $\alpha_i$ satisfy
%\[\alpha_i = v+ z \frac{x_i}{y_i}, \;a_i=v-z\frac{1-x_i}{y_i} \; \forall i \in [k].\]
Now, we can construct the matrix A as follows:
\[A=\begin{bmatrix}
    a_1 & \alpha_2 & ... & \alpha_k & v &... & v \\
    \alpha_1 & a_2 & ... & \alpha_k & v &... & v \\
    ...&...&...&...&...&...&... \\
    \alpha_1 &\alpha_2 &...& a_k & v &... & v \\
    \alpha_1-z &\alpha_2 -z &...& \alpha_k-z & v &... & v \\
    ...&...&...&...&...&...&... \\
    \alpha_1-z &\alpha_2 -z &...& \alpha_k-z & v &... & v
    
  \end{bmatrix}
\]
Firstly, note that the value of $A$ is the same as the value of $A'$. Secondly, note that $\alpha_{i} - z > a_{i}$ for all $i \in [k]$, and thus, the first $k$ columns of $A$ satisfy the conditions described in Lemma \ref{lem}.
%Note that here $\alpha_i > v > z \forall i \in [k]$ and
%\[\alpha_i - z > a_i \; \forall i \in [k]\].
Lastly, note that
\begin{equation*}
   \begin{aligned}
   \mathbf{x}^TA= \left[v,v,\dots v\right], \; A\mathbf{y}=\left[v,v,\dots v\right]^T.   
   \end{aligned} 
\end{equation*}
Therefore, $(\mathbf{x}, \mathbf{y})$ is a minimax solution for the matrix $A$ and, according to Lemma $\ref{lem}$, $\mathbf{x}$ is therefore the unique minimax strategy for the row player.
\end{proof}
In Theorem \ref{First theorem of constructing matrix A}, if the support of $\mathbf{y}$ is equal to $m$, then $\mathbf{y}$ is the unique minimax strategy for the column player \cite{bohnenblust1950}.
%The roles of $\alpha_i$ and $a_i$ will be clearer later, once we study the systems of inequalities that guarantee the existence and uniqueness of the minimax equilibrium $x$.
\begin{theorem}
\label{theorem: unique for row player}
Let $\mathbf{x} \in \Delta_n$, $\mathbf{y} \in \Delta_m$ such that $k = support(\mathbf{x}) < l=support(\mathbf{y})$. Let the matrix $A$ be of the form
\[
A=
  \begin{bmatrix}
    a_1 & \alpha_2 & ... & \alpha_k & \beta_1& ... &\beta_1 \\
    \alpha_1 & a_2 & ... & \alpha_k & \beta_2& ... &\beta_2 \\
    ...&...&...&... \\
    \alpha_1 &\alpha_2 &...& a_k & \beta_k& ... &\beta_k\\
    \alpha_1-z &\alpha_2-z &...& \alpha_k-z & v& ... &v\\
    ...&...&...&... \\
    \alpha_1-z &\alpha_2-z &...& \alpha_k-z & v& ... &v
    
  \end{bmatrix}
\]
where the parameters of $A$ satisfy
\begin{equation*}
    \begin{aligned}
     &0< v_1 < v\bar{y}, \quad \quad \bar{y}= \sum_{i=k+1}^l \mathbf{y}_i,\quad z=\frac{v \bar{y}-v_1}{\sum_{i=1}^k \mathbf{y}_i}, \\
     &\beta_i= v, \quad 
     \alpha_i= v+\frac{x_i (v\bar{y}-v_1)}{\mathbf{y}_i},\quad
     a_i=\alpha_i - \frac{v\bar{y}-v_1}{\mathbf{y}_i} \;\forall i \in [k]. 
    \end{aligned}
\end{equation*}
then $\mathbf{x}$ is the unique minimax strategy for the row player in the zero-sum game described by $A$.
\end{theorem}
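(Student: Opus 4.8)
The plan is to reuse the two-step template from the proof of Theorem~\ref{First theorem of constructing matrix A}: first verify that $(\mathbf{x},\mathbf{y})$ is a minimax equilibrium of $A$ with value $v$, so that systems~(\ref{feas1}) and~(\ref{feas2}) hold, and then apply Lemma~\ref{lem} to promote ``$\mathbf{x}$ is a minimax strategy of the row player'' to ``$\mathbf{x}$ is the \emph{unique} minimax strategy of the row player''. As in the earlier proof, after permuting rows and columns we may assume the supports of $\mathbf{x}$ and $\mathbf{y}$ occupy the first $k$ and first $l$ coordinates, so that $\mathbf{x}_i=0$ for $i>k$ and $\mathbf{y}_j=0$ for $j>l$; this is exactly the indexing used in the displayed matrix. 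Throughout I write $w:=v\bar{y}-v_1>0$, so that $\alpha_i-a_i=w/\mathbf{y}_i$, $\alpha_i=v+\mathbf{x}_i w/\mathbf{y}_i$, and $z=w/\sum_{i\le k}\mathbf{y}_i=w/(1-\bar{y})$.

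For the equilibrium step I would compute $\mathbf{x}^{T}A$ and $A\mathbf{y}$ entrywise and show both equal $(v,\dots,v)$. For a column $j\le k$, using $\sum_{i\le k}\mathbf{x}_i=1$ the $j$-th entry of $\mathbf{x}^{T}A$ is $a_j\mathbf{x}_j+\alpha_j(1-\mathbf{x}_j)=\alpha_j-\mathbf{x}_j(\alpha_j-a_j)=v$ after substituting $\alpha_j=v+\mathbf{x}_j w/\mathbf{y}_j$; for a column $j>k$ it is $\sum_{i\le k}\beta_i\mathbf{x}_i=v$ since every $\beta_i=v$. Hence $\mathbf{x}^{T}A=(v,\dots,v)$, so system~(\ref{feas1}) holds. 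For $A\mathbf{y}$, a row $i\le k$ gives $(A\mathbf{y})_i=(a_i-\alpha_i)\mathbf{y}_i+\sum_{j\le k}\alpha_j\mathbf{y}_j+v\bar{y}$; here $\sum_{j\le k}\alpha_j\mathbf{y}_j=v(1-\bar{y})+w\sum_{j\le k}\mathbf{x}_j=v(1-\bar{y})+w$ and $(a_i-\alpha_i)\mathbf{y}_i=-w$, so the sum telescopes to $v$. A row $i>k$ gives $(A\mathbf{y})_i=\sum_{j\le k}\alpha_j\mathbf{y}_j-z(1-\bar{y})+v\bar{y}$, and since $z(1-\bar{y})=w$ this again collapses to $v$. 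Thus $A\mathbf{y}=(v,\dots,v)^{T}$, system~(\ref{feas2}) holds, and $(\mathbf{x},\mathbf{y})$ is a minimax equilibrium of value $v$.

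For the uniqueness step I would invoke Lemma~\ref{lem} directly: fix any $k'\in[n]$ with $\mathbf{x}^{*}_{k'}>0$, i.e.\ $k'\in[k]$, and let column $k'$ be the witness. Its entries are $a_{k'}$ in row $k'$, $\alpha_{k'}$ in the remaining rows $i\le k$ (the rows with $\mathbf{x}^{*}_i>0$), and $\alpha_{k'}-z$ in the rows $i>k$ (the rows with $\mathbf{x}^{*}_i=0$), so in the notation of Lemma~\ref{lem} we have $\gamma=a_{k'}$, $\beta=\alpha_{k'}$, $\alpha=\alpha_{k'}-z$. It then remains to check $\alpha>\gamma>0$ and $\beta>\gamma$: since $w>0$ we have $\beta-\gamma=w/\mathbf{y}_{k'}>0$; since $z=w/(1-\bar{y})\le w/\mathbf{y}_{k'}$, with strict inequality when $k\ge2$, we have $\alpha-\gamma=w/\mathbf{y}_{k'}-z>0$; and $\gamma=a_{k'}=v-(1-\mathbf{x}_{k'})w/\mathbf{y}_{k'}>0$ follows from the stated bound on $v_1$. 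With the hypotheses of Lemma~\ref{lem} met at every coordinate of the support of $\mathbf{x}$, the lemma yields that $\mathbf{x}$ is the unique minimax strategy of the row player.

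I expect the main obstacle to be the bookkeeping in the $A\mathbf{y}$ computation --- keeping straight the $k$ ``core'' rows against the $n-k$ ``padding'' rows and the two column blocks while substituting the closed forms for $z,\alpha_i,a_i$ --- and, relatedly, pinning down precisely which constraints on $v_1$ (beyond $0<v_1<v\bar{y}$) are needed for $\gamma=a_{k'}>0$, together with $z>0$, to guarantee the strict inequalities of Lemma~\ref{lem}; the degenerate case $k=1$, where column $k'$ is constant and $\alpha=\gamma$, should be excluded or treated separately.
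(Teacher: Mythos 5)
Your proposal is correct and follows essentially the same route as the paper: verify systems~(\ref{feas1}) and~(\ref{feas2}) to establish that $(\mathbf{x},\mathbf{y})$ is a minimax equilibrium of value $v$, then invoke Lemma~\ref{lem} column by column for uniqueness (the paper derives the closed forms for $\alpha_i,a_i,\beta_i$ in the fully-mixed case and only gestures at the padded case, whereas you verify the full matrix directly, which is if anything more complete). The two caveats you flag are genuine defects of the theorem as stated rather than of your argument: when $k=1$ the construction collapses to the constant matrix $v$ (so uniqueness fails and the witness column has $\alpha=\gamma$), and $0<v_1<v\bar{y}$ alone does not guarantee $a_i>0$ --- the paper asserts both points without justification, though positivity of $\gamma$ is never actually used in the proof of Lemma~\ref{lem}.
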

\begin{proof}
Once again, we prove the theorem case by case.

\par First, suppose $support(\mathbf{x})=n$ and $support(\mathbf{y})=m$ i.e.\ both $\mathbf{x}$ and $\mathbf{y}$ are fully-mixed. Consider the matrix of the following form:
\[
A=
  \begin{bmatrix}
    a_1 & \alpha_2 & ... & \alpha_n & \beta_1& ... &\beta_1 \\
    \alpha_1 & a_2 & ... & \alpha_n & \beta_2& ... &\beta_2 \\
    ...&...&...&... \\
    \alpha_1 &\alpha_2 &...& a_n & \beta_n& ... &\beta_n
  \end{bmatrix}
.\]
If $a_i < \alpha_i$ for all $i \in k$ then, by Lemma \ref{lem}, the row player has a unique minimax strategy. Thus, similarly to Theorem \ref{First theorem of constructing matrix A}, we simply need to choose $a_{i}$, $\alpha_{i}$ and $\beta_{i}$ so that $(\mathbf{x}, \mathbf{y})$ is a minimax strategy of $A$, whilst respecting the aforementioned conditions on $a_{i}$ and $\alpha_{i}$.
%\begin{equation}\label{first condition of theorem 2}
%a_i < \alpha_i \;\forall i \in [n],
%\end{equation}
%then, according to Lemma (\ref{lem}), the row player must have a unique %minimax strategy . If the given $x,y$ are minimax equilibrium of matrix %$A$, then the necessary and sufficient conditions will be:
\par Using systems (\ref{feas1}) and (\ref{feas2}), and the same logic applied in Theorem \ref{First theorem of constructing matrix A}, we know that $(\mathbf{x}, \mathbf{y})$ is  a minimax strategy of $A$ if the following equalities hold:
\begin{equation}\label{xeq}
    \mathbf{x}_1(a_1-\alpha_1)+\alpha_1= \mathbf{x}_2(a_2-\alpha_2)+\alpha_2= \dots = \mathbf{x}_n(a_n-\alpha_n)+\alpha_n=\sum_{i=1}^n \mathbf{x}_i \beta_i =v 
\end{equation}
\begin{equation}\label{yeq}
        (a_i-\alpha_i)\mathbf{y}_i+\sum_{j=1}^n \alpha_j \mathbf{y}_j +\beta_i \bar{y} =v \quad \forall i \in [n] \text{ where }  \bar{y}= \sum_{i=n+1}^m \mathbf{y}_i
\end{equation}
Setting $v_{1} = v - \sum^{n}_{j=1}\alpha_{j}\mathbf{y}_{j}$, we note that (\ref{yeq}) can be written as:
\begin{equation*}
    (a_i-\alpha_i)\mathbf{y}_i+\beta_i \bar{y} = v_1 \quad \forall i \in [n]
\end{equation*}
Furthermore, noting that $(a_i - \alpha_i) = (v_1 - \beta_{i}\bar{y})\; /\; \mathbf{y}_{i}$, equation (\ref{xeq}) can be written as:
\begin{equation*}
    \frac{\mathbf{x}_i}{\mathbf{y}_i}(v_1-\beta_i \bar{y}) +\alpha_i = \sum^{n}_{i=1}\mathbf{x}_{i}\beta_{i}=v \quad \forall i \in [n]
\end{equation*}
%Upon further inspection, we see that these equations are equivalent to:
%\begin{equation*}
%\begin{aligned}
 %(a_i-\alpha_i)\mathbf{y}_i+\beta_i \bar{y} = v_1  &< v \quad \forall i \in [n]  \\
 %\frac{\mathbf{x}_i}{\mathbf{y}_i}(v_1-\beta_i \bar{y}) +\alpha_i &=v \quad \forall i \in [n]  \\
 %\sum_{i=1}^n\mathbf{x}_i \beta_i &=v.
%\end{aligned}
%\end{equation*}
Thus, the system of equations we must satisfy to guarantee that $(\mathbf{x}, \mathbf{y})$ is a minimax equilibirum of the matrix $A$ is:
\begin{align*}
         (a_i-\alpha_i)\mathbf{y}_i+\beta_i \bar{y} &= v_1 \quad \forall i \in [n] \\
         \frac{\mathbf{x}_i}{\mathbf{y}_i}(v_1-\beta_i \bar{y}) +\alpha_i &= v  \quad \forall i \in [n] \\
         \sum^{n}_{i=1}\mathbf{x}_{i}\beta_{i}&=v
\end{align*}
To get rid of the third equality, we simply set $\beta_i = v$, for all $i \in [n]$. The system of equations then becomes:
\begin{align*}
    (a_i-\alpha_i)\mathbf{y}_i+v \bar{y} &= v_1 \quad \forall i \in [n] \\
    \frac{\mathbf{x}_i}{\mathbf{y}_i}(v_1-v\bar{y}) +\alpha_i &= v  \quad \forall i \in [n] \\
\end{align*}
and hence we see that:
\begin{align*}
    \alpha_{i} &= v + \frac{\mathbf{x}_i}{\mathbf{y}_i}(v\bar{y} - v_1) \quad \forall i \in [n] \\
    a_i &= \alpha_i - \frac{v\bar{y} - v_1}{\mathbf{y}_{i}} \quad \forall i \in [n]
\end{align*}
Thus, the parameters as described in the theorem statement guarantee that $(\mathbf{x}, \mathbf{y})$ is a minimax equilibrium of $A$. To guarantee that $0 < a_{i} < \alpha_{i}$ it suffices to choose $0 < v_1 < v\bar{y}$. Thus the parameters described in the theorem statement guarantee that all the conditions of Lemma \ref{lem} hold. As a result, $\mathbf{x}$ must be the unique minimax strategy for the row player when $\mathbf{x}$ and $\mathbf{y}$ are fully-mixed.

%From there, the problem becomes finding $v_1$ and $\beta_i$ such that:
%\begin{equation}\label{equation 1 of theorem 2}
    %\begin{aligned}
     %\frac{(v_1-\beta_i \bar{y}) x_i}{y_i} < 0\; \forall i \in [n] \\
     %\sum_{i=1}^n x_i\beta_i = v \\
     %v_1 < v.
    %\end{aligned}
%\end{equation}
%One solution for $\beta_i$ is
%\[\beta_i=v \; \forall i \in [n]. \]
%Plug it in equations (\ref{equation 1 of theorem 2}) then it is equivalent to
%\[v_1 <  v \bar{y}.\]
%Pick an $v_1$ such that $0< v_1< v \bar{y}$ then we have matrix A with the following parameters satisfies equations (\ref{first condition of theorem 2}),(\ref{second condition of theorem 2})
%\begin{equation*}
    %\begin{aligned}
     %\beta_i= v, \quad 
     %\alpha_i= v+\frac{x_i (v\bar{y}-v_1)}{\mathbf{y}_i},\quad
     %a_i=\alpha_i - \frac{v\bar{y}-v_1}{\mathbf{y}_i} \;\forall i \in [n]. 
    %\end{aligned}
%\end{equation*}
We can apply procedure similar to that in the proof of Theorem \ref{First theorem of constructing matrix A} to extend this argument to the case where $\mathbf{x}$ and $\mathbf{y}$ are not fully-mixed.
\end{proof}

\section{Last Round Convergence in Two-Player Zero-Sum Games}
\label{section: last round convergence}

Given the construction of a matrix $A$ with the desired property in the previous section, we now turn to investigate how to guide the row player to achieve last round convergence when the row player uses a no-regret algorithm. 
%
%It is well known that if both players follow a no-regret algorithms then the average pay-off will converge to the minimax equilibrium (see e.g.,  \cite{Nicolo06} or  \cite{Arora2012} for more details). However, common no-regret algorithms often fail to achieve last-round convergence as the system’s behaviour is Poincare recurrent(see, e.g.,~\cite{mertikopoulos2018} for more details). Daskalakis and Panageas~\cite{Daskalakis2018c} proved that if both players use a no-regret algorithm called Optimistic Multiplicative Weight Update algorithm(OMWU) with sufficiently small step size, they will eventually achieve last round convergence to the minimax equilibrium. Although OMWU can be used to achieve our goal of last round convergence, the algorithm will fail to converge if the row player follows a different type of no-regret algorithm or just simply use OMWU with a larger step size. Given the knowledge of the matrix $A$ of the game, 
To do so, we first show that a na\"{i}ve approach, namely to repeatedly playing $\mathbf{y}^*$, will not always lead to the desired last round convergence (i.e., the row player will converge to $\mathbf{x}^{*}$): %In particular, we state that: 
\begin{claim}
\label{claim: naive strategy doesn not work}
If $support(\mathbf{x}) > 1$%(i.e., $x$ has positive probability mass on at least two different pure strategies)
, then there is no guarantee that if the column player repeatedly plays $\mathbf{y}^*$, the row player will eventually converge to $\mathbf{x}^{*}$. 
\end{claim}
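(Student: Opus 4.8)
The plan is to establish the claim by \emph{refutation via example}: exhibiting one no-regret algorithm for the row player that, when the column player commits to $\mathbf{y}^*$ at every round, fails to have $\mathbf{x}^*$ as its last iterate (let alone converge to it). Since ``no guarantee'' means exactly that such an algorithm cannot be excluded, producing one suffices.

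First I would make the reduction explicit: if the column player plays $\mathbf{y}^*$ in every round, the row player is solving online linear optimisation against the single fixed loss vector $\ell := A\mathbf{y}^*$. The crucial structural input is that this loss is ``flat'' on the relevant coordinates. Because $(\mathbf{x}^*,\mathbf{y}^*)$ is a minimax equilibrium of $A$, the complementary-slackness equalities in system~(\ref{feas2}) give $\ell_i = v$ for every $i$ with $\mathbf{x}^*_i > 0$, and $\ell_i \ge v$ for the remaining $i$; moreover for the matrices built in Theorems~\ref{First theorem of constructing matrix A} and~\ref{theorem: unique for row player} one has $A\mathbf{y}^* = (v,\dots,v)^T$ outright. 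I would then record the immediate consequence: any sequence $(\mathbf{x}_1,\dots,\mathbf{x}_T)$ with each $\mathbf{x}_t$ supported on $\mathrm{support}(\mathbf{x}^*)$ accumulates payoff exactly $Tv$, while every fixed comparator $\mathbf{x}\in\Delta_n$ pays $T\mathbf{x}^TA\mathbf{y}^* \ge Tv$; plugging into the definition of regret gives $\mathcal{R}_{T,\mathbf{x}} = Tv - T\mathbf{x}^TA\mathbf{y}^*\le 0$ for all $\mathbf{x}$, hence $\max_{\mathbf{x}}\mathcal{R}_{T,\mathbf{x}}/T\to 0$. So \emph{every} such sequence is no-regret.

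The counterexample I would then present is the row player who always plays the pure strategy $\mathbf{e}_1$ that deterministically picks the first action, which lies in $\mathrm{support}(\mathbf{x}^*)$ under the paper's WLOG coordinate ordering; this is precisely the trajectory produced by, for instance, Follow-the-Leader with a fixed tie-breaking rule, or multiplicative weights initialised at $\mathbf{e}_1$, since under a constant loss no weight ever moves. By the previous step this is a no-regret algorithm, yet its last iterate is $\mathbf{e}_1$, and $\mathrm{support}(\mathbf{x}^*) > 1$ forces $\mathbf{x}^*_1 < 1$, so $\mathbf{e}_1 \ne \mathbf{x}^*$ and the row player does not converge to $\mathbf{x}^*$. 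I would add the remark that multiplicative weights from the uniform start gives an equally valid no-regret trajectory whose limit is the uniform strategy, which also differs from $\mathbf{x}^*$ except in the degenerate case where $\mathbf{x}^*$ is uniform --- the $\mathbf{e}_1$ instantiation being preferable only because it works uniformly in $\mathbf{x}^*$.

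I do not expect a genuine obstacle in the calculations. The one point that needs care is conceptual: verifying that the exhibited strategy really is a no-regret algorithm in the sense of the paper's definition, i.e.\ that its regret is sublinear against the sequence the column player \emph{actually} plays (the constant $\mathbf{y}^*$), which the flatness identity settles with regret $\le 0$. The substantive content of the claim is that identity $A\mathbf{y}^* = (v,\dots,v)^T$: repeatedly playing $\mathbf{y}^*$ applies no differential pressure among the best responses of the row player, so nothing pins a no-regret learner to the particular mixture $\mathbf{x}^*$; and the hypothesis $\mathrm{support}(\mathbf{x}^*) > 1$ is exactly what guarantees there is more than one best response, leaving room for the last iterate to rest away from $\mathbf{x}^*$.
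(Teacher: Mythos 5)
Your overall strategy is the same as the paper's: observe that under the constant play $\mathbf{y}^*$ the row player faces the flat loss vector $A\mathbf{y}^*$, which exerts no differential pressure among the actions in $\mathrm{support}(\mathbf{x}^*)$, and then exhibit a no-regret learner whose iterates do not move to $\mathbf{x}^*$. The paper instantiates this with MWU (from the uniform start), whose iterates remain uniform and hence differ from $\mathbf{x}^*$ whenever $\mathbf{x}^*$ is not uniform on its support; your closing remark reproduces exactly that argument, caveat included.

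The gap is in your primary counterexample. The hypothesis of the claim is that the row player \emph{uses a no-regret algorithm}, which throughout the paper (the stability definition, Theorem~\ref{FTRL has stability property}, and the regret-contradiction in Theorem~\ref{extension of LRCA algorithm}) means an algorithm carrying a worst-case sublinear-regret guarantee against arbitrary opponent sequences. Neither of your instantiations of ``always play $\mathbf{e}_1$'' has that guarantee: Follow-the-Leader with fixed tie-breaking can suffer linear regret against an adaptive loss sequence, and multiplicative weights initialised at a vertex places zero weight on the other actions forever, so its regret bound (which scales with $\log(1/w_1(i^*))$) is vacuous. Your justification --- that the realised regret against the \emph{actually played} constant sequence is $\le 0$ --- shows only that this particular run is low-regret, not that the algorithm belongs to the no-regret class; that conflation is the flaw. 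The idea is salvageable (e.g.\ FTRL with a strongly convex regularizer minimised at $\mathbf{e}_1$ is a bona fide no-regret algorithm whose iterates stay at $\mathbf{e}_1$ under a constant flat loss, by the same computation as in Theorem~\ref{FTRL has stability property}), but as written you should either supply such an algorithm or fall back on the MWU-from-uniform example, accepting, as the paper does, that it only covers the case where $\mathbf{x}^*$ is not the uniform distribution.
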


\begin{proof}
Since $support(\mathbf{x}^{*}) > 1$, the mixed strategy $\mathbf{x}$ places positive probability mass on at least two different pure strategies. It is easy to prove that any distribution over these two pure strategies also achieves the value of the game against $\mathbf{y}^*$. Consider any no-regret algorithm whose internal state depends only on the rewards observed, such as the multiplicative weight update algorithm (MWU). In the case of MWU, as the rewards observed for both pure strategies are the same, the weights assigned to each pure strategy will be the same at every time step, and thus at every time step both pure strategies will be played with equal probability. Therefore, if the minimax strategy of the row player is not a uniform distribution over the pure strategies in its own support, then the row player will not converge to their minimax strategy. Therefore, unless $support(\mathbf{x}) = 1$ (which is very rare in real world applications), by just repeatedly playing $\mathbf{y}^*$ the column player cannot guide the row player to converge to $\mathbf{x}^{*}$.
\end{proof}
%The proof is rather simple. However, for the sake of completeness, we provide the full proof in the appendix.
%
Given this result, %the next question is how to design a game playing policy for the column player to guide her opponent converge to $x$. 
we need to design a different game playing policy for the column player.
As mentioned in Section~\ref{section: related work}, Dinh \emph{et al.}~\cite{dinh2020last} have proposed a new algorithm called LRCA, which if played by the column player (i.e., the system designer), can lead to last round convergence (for both agents) if the other agent plays one of the following popular no-regret algorithms: (i) multiplicative weight update; (ii) online mirror descent; and (iii) linear multiplicative weight update.
%
%in case the row player follows a common no-regret algorithm. 
%
The pseudo code of LRCA is described in Algorithm~\ref{LRCA algorithm 1} below:

\begin{algorithm}
\textbf{Input:} Current iteration $t$, past feedback $x_{t-1}^\top A$ of the row player\\
\textbf{Output:} Strategy $y_t$ for the column player\\
\If{$t=2k-1, \;   k \in \mathbb{N}$}{$y_t=y^*$}
\If{$t=2k, \; k \in \mathbb{N}$}{ $e_t : = \argmax_{e \in \{e_1,e_2,\dots e_m\}}x_{t-1}^\top Ae$;$ \quad f(x_{t-1}): = \max_{y \in \Delta_m}x_{t-1}^\top Ay$ \\
$\alpha_t: = \frac{f(x_{t-1})-v}{\max \left(\frac{n}{4},2\right)}$\\
$y_t: =(1-\alpha_t)y^*+\alpha_t e_t$} 
\caption{\emph{L}ast \emph{R}ound \emph{C}onvergence with \emph{A}symmetry (LRCA) algorithm}\label{LRCA algorithm 1}
\end{algorithm}

We prove that LRCA can also work with other no-regret algorithms besides those  listed above, should the no-regret algorithm of the row player have a ``stability" property defined below:
%{\color{red}{(I think you should clearly define what this stability means, just by adding a sentence or so around here.\textcolor{cyan}{I have mentioned it right in the theorem below so it should be clear now})}}
\begin{defn}\label{stability condition}
A no-regret algorithm is \emph{stable} if $\; \forall t: \mathbf{y}_t=\mathbf{y}^* \implies \mathbf{x}_{t+1}=\mathbf{x}_t$.
\end{defn} 
%In a game with unique pure minimax equilibrium for the row player, by following the minimax equilibrium strategy $y^*$, the column player can guide the row player to play the unique pure minimax equilibrium. However, this type of game is rare and the minimax equilibrium of the row player is often a mixed strategy so that this simple algorithm will fail (e.g., the row player' strategy will converge but not to the minimax equilibrium).
%{\color{red}{TODO: rewrite this!}}
%We then have the following theorem:
\begin{theorem}\label{extension of LRCA algorithm}
Assume that the row player follows a stable no-regret algorithm and $n$ is the dimension of the row player's strategy. Then, by following LRCA,  for any $\epsilon >0$, there exists $l \in \mathbb{N}$ such that $\frac{\mathcal{R}_{l}}{l}=\mathcal{O}(\frac{\epsilon^2}{n})$ and $f(\mathbf{x}_l)- v \leq \epsilon$.
\end{theorem}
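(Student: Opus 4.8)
The plan is to analyze the LRCA dynamics when the row player runs a stable no-regret algorithm, and to track the quantity $f(\mathbf{x}_t) - v \geq 0$ over the odd/even iteration pairs. First I would exploit the stability property directly: on every odd round $t = 2k-1$, LRCA plays $\mathbf{y}_t = \mathbf{y}^*$, so by Definition~\ref{stability condition} we get $\mathbf{x}_{2k} = \mathbf{x}_{2k-1}$. Hence the row player's iterate only changes across an even round $t = 2k$, where the column player perturbs $\mathbf{y}^*$ toward the best-response vertex $e_t$ with weight $\alpha_t = (f(\mathbf{x}_{t-1}) - v)/\max(n/4, 2)$. So the effective dynamics reduce to a single sequence $\mathbf{x}_1 = \mathbf{x}_2, \mathbf{x}_3 = \mathbf{x}_4, \dots$ updated only by the no-regret algorithm's response to the perturbed plays $\mathbf{y}_{2k}$.

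Next I would set up a potential/monotonicity argument on $\phi_k := f(\mathbf{x}_{2k-1}) - v$. The key claim to establish is that either $\phi_k$ is already at most $\epsilon$ for some $k$, or the no-regret property forces $\phi_k$ to decrease on average. Concretely, by the no-regret guarantee of the row player with respect to the fixed comparator $\mathbf{x}^*$, we have $\sum_{t=1}^{l} (\mathbf{x}_t - \mathbf{x}^*)^\top A \mathbf{y}_t \leq o(l)$. On odd rounds $\mathbf{y}_t = \mathbf{y}^*$ and $(\mathbf{x}_t - \mathbf{x}^*)^\top A \mathbf{y}^* \geq 0$ since $\mathbf{x}^*$ is a minimax strategy and $v = (\mathbf{x}^*)^\top A \mathbf{y}^*$ (actually $\mathbf{x}^\top A \mathbf{y}^* = v$ for the relevant support, so this term is nonnegative). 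On even rounds, the choice of $e_t$ and $\alpha_t$ means $(\mathbf{x}_t - \mathbf{x}^*)^\top A \mathbf{y}_t$ captures a $\Theta(\alpha_t)$ fraction of the gap $f(\mathbf{x}_t) - v$; combined with $\mathbf{x}_{2k} = \mathbf{x}_{2k-1}$, each even round contributes roughly $\alpha_t \cdot \phi_k \cdot (1 - 1/n)$-type positive terms (I would copy the precise constant-tracking from the LRCA analysis in \cite{dinh2020last}, which handles exactly this vertex-perturbation bookkeeping). Summing, $\sum_k \alpha_k \phi_k \leq o(l)$, and since $\alpha_k = \phi_k / \max(n/4,2)$ this gives $\sum_k \phi_k^2 = \mathcal{O}(n \cdot o(l))$, so by averaging there exists $l$ with $\phi_{l/2}^2 = \mathcal{O}(n \cdot o(l)/l) \to 0$; choosing $l$ large enough makes this $\leq \epsilon$, i.e.\ $f(\mathbf{x}_l) - v \leq \epsilon$.

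For the regret bound $\mathcal{R}_l / l = \mathcal{O}(\epsilon^2/n)$, I would note that this refers to the \emph{column} player's regret (the designer's own accumulated payoff). Since on each even round LRCA deviates from the pure minimax play $\mathbf{y}^*$ by an amount $\alpha_t = \Theta(\phi_k/n)$, the per-round payoff loss relative to the game value is $\mathcal{O}(\alpha_t) = \mathcal{O}(\phi_k/n)$, and once we are in the regime where $\phi_k \leq \epsilon$ (which, by the argument above, the iterates enter and — using stability plus the contraction structure — stay near), the average loss is $\mathcal{O}(\epsilon^2/n)$ after accounting for the quadratic dependence coming from $\alpha_t \propto \phi_k$ and $\phi_k \propto \alpha_t$. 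This matches the $\mathcal{O}(\epsilon^2/n)$ claimed.

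The main obstacle I anticipate is the \emph{decrease} step: showing that the no-regret guarantee actually translates into $\phi_k$ shrinking rather than merely oscillating. The subtlety is that stability only freezes $\mathbf{x}$ on odd rounds; on even rounds the adversarial-looking perturbation $\mathbf{y}_{2k}$ could in principle push $\mathbf{x}$ in an unhelpful direction. Resolving this requires carefully using that $\alpha_t$ is small (scaled by $1/n$) so the perturbation is a controlled, "gentle" nudge, plus the fact that the comparator $\mathbf{x}^*$ is the \emph{unique} minimax strategy (guaranteed by the matrix construction of Section 4), which rules out the row player's no-regret iterate drifting toward an alternative equilibrium. I would lean on the corresponding lemmas in \cite{dinh2020last} for the quantitative form of this argument, adapting them so that the only property of the row player's algorithm used is stability (Definition~\ref{stability condition}) together with sublinear regret, rather than the specific update rule of MWU/OMD/linear-MWU.
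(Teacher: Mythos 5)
Your first two paragraphs essentially reproduce the paper's argument: stability freezes $\mathbf{x}_{2k}=\mathbf{x}_{2k-1}$ across odd rounds, and on each even round where $f(\mathbf{x}_{2k-1})-v>\epsilon$ the construction of $\mathbf{y}_{2k}$ forces $\mathbf{x}_{2k}^{T}A\mathbf{y}_{2k}\geq(1-\alpha_{2k})v+\alpha_{2k}f(\mathbf{x}_{2k-1})>v+\epsilon^{2}/\max(n/4,2)$, so the row player's average payoff exceeds $v$ by a fixed constant while the best fixed pure strategy in hindsight earns at most $v$ on average --- contradicting the row player's no-regret guarantee. Your $\sum_{k}\alpha_{k}\phi_{k}\leq o(l)$, hence $\sum_{k}\phi_{k}^{2}=O(n)\cdot o(l)$, averaging step is the same computation phrased positively rather than as a contradiction, and your choice of comparator ($\mathbf{x}^{*}$ instead of the best pure strategy) is immaterial. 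So the existence of $l$ with $f(\mathbf{x}_{l})-v\leq\epsilon$ is correctly established.

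Two genuine problems remain. First, $\mathcal{R}_{l}$ in the statement is the \emph{row} player's regret, not the column player's: the paper's remark after the theorem plugs in $\mathcal{R}_{l}=O(\sqrt{l})$ for the row player's no-regret algorithm to obtain the $O(1/\epsilon^{4})$ round count, which only makes sense under that reading. With the correct reading the bound falls out of the argument you already made: while $f(\mathbf{x}_{t})-v>\epsilon$ persists, the row player's average regret stays above $\epsilon^{2}/(2\max(n/4,2))=\Omega(\epsilon^{2}/n)$, so the first $l$ at which an iterate satisfies $f(\mathbf{x}_{l})-v\leq\epsilon$ can be taken to be a time at which $\mathcal{R}_{l}/l=O(\epsilon^{2}/n)$; no separate analysis of the designer's payoff is needed, and your hand-wave about the quadratic dependence for the column player does not establish the claimed bound. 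Second, your final paragraph manufactures an obstacle the theorem does not pose: nothing needs to decrease or contract, uniqueness of $\mathbf{x}^{*}$ plays no role here, and no quantitative lemmas from the LRCA paper are required beyond the update rule itself --- the theorem only asserts that \emph{some} iterate is an $\epsilon$-equilibrium, which your averaging argument already delivers. Keeping the row player there afterwards is handled separately in the paper (switch to playing $\mathbf{y}^{*}$ and invoke stability), not inside this proof.
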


\begin{proof}
We will prove the theorem by contradiction. Suppose there exists $\epsilon > 0$ such that: 
\[ f(\mathbf{x}_l) - v > \epsilon, \; \forall l \in \mathbb{N}.
\]
Then, following the update rule of Algorithm 1 (LRCA) we have:
\[\mathbf{y}_{2k-1}=\mathbf{y}^* \; ; \; \alpha_{2k} = \frac{ f(\mathbf{x}_{2k-1}) - v}{\max(\frac{n}{4}, 2)} > \frac{\epsilon}{\max(\frac{n}{4}, 2)}.
\]
By the stability property, as $\mathbf{y}_{2k-1} = \mathbf{y}^*$, we then have: $\mathbf{x}_{2k-1}= \mathbf{x}_{2k}$. Following the update rule of Algorithm 1 (LRCA):
\begin{subequations}
    \begin{align}
     \mathbf{x}_{2k}^{T} A\mathbf{y}_{2k} &= \mathbf{x}_{2k-1}^{T}  A \left( (1-\alpha_{2k})\mathbf{y}^*+ \alpha_{2k}\mathbf{e}_{2k} \right)\nonumber \\
     &\geq (1-\alpha_{2k}) v +\alpha_{2k} f(\mathbf{x}_{2k-1}) \label{Stablity 1 a} \\
     &> (1-\alpha_{2k}) v +\alpha_{2k} (v+\epsilon) \label{Stablity 1 b} \\
     &\geq v + \frac{\epsilon^2}{\max (\frac{n}{4},2 )} \nonumber,
    \end{align}
\end{subequations}
Where inequality (\ref{Stablity 1 a}) is due to 
\[\mathbf{x}^{T} A\mathbf{y}^* \geq v \; \forall \mathbf{x} \in \Delta_n,\]
and where inequality (\ref{Stablity 1 b}) comes from the assumption that $f(\mathbf{x}_l)- v > \epsilon$.
We then have: 
\[\frac{1}{T} \sum_{t=1}^T  \mathbf{x}_t^{T}A\mathbf{y}_t \geq \frac{v+ \left( v + \frac{\epsilon^2}{\max (\frac{n}{4},2 )}\right)}{2} = v+ \frac{\epsilon^2}{2\max (\frac{n}{4},2 )} .\]
We also note that, from the definition of the value of the game, we have: 
\[\min_{i}\frac{1}{T} \sum_{t=1}^T  \mathbf{e}_i^{T} A\mathbf{y}_t = \min_{i}\mathbf{e}_i^{T} A  \frac{\sum_{t=1}^T \mathbf{y}_t}{T} \leq v.
\]
Thus, we have: 
\[
\lim_{T \rightarrow \infty} {\min_{i}\frac{1}{T} \sum_{t=1}^T  \mathbf{e}_i^{T} A\mathbf{y}_t-\frac{1}{T} \sum_{t=1}^T  \mathbf{x}_t^{T} A\mathbf{y}_t} \leq v - \left( v+ \frac{\epsilon^2}{2\max (\frac{n}{4},2 )} \right) = - \frac{\epsilon^2}{2\max (\frac{n}{4},2 )},  \]
contradicting to the definition of a no-regret algorithm: 
\[
\lim_{T \rightarrow \infty} {\min_{i}\frac{1}{T} \sum_{t=1}^T  \mathbf{e}_i^{T} A\mathbf{y}_t-\frac{1}{T} \sum_{t=1}^T  \mathbf{x}_t^{T} A\mathbf{y}_t}=0. \]
\end{proof}

Note that $(\mathbf{x}_l, \mathbf{y}^*)$ which satisfy $f(\mathbf{x}_l)-v \leq \epsilon$ are $\epsilon$-Nash equilibria and $\epsilon=0$ implies $\mathbf{x}_l$ is the Nash equilibrium of the row player. For no-regret algorithms with optimal regret bound $\mathcal{R}_{l} = O(\sqrt{l})$, following Theorem \ref{extension of LRCA algorithm}, the row player will reach an $\epsilon$-Nash equilibrium in at most $\mathcal{O}(\frac{1}{\epsilon})^4$ rounds. Due to the full information feedback assumption, the column player will know when the row player plays an $\epsilon$-Nash equilibrium strategy. Depending on the number of rounds, the column player can lead the row player to play any $\epsilon$-Nash equilibrium, after that switching from LRCA to $\mathbf{y}^*$ so the row player will remain playing the $\epsilon$-Nash equilibrium (due to the stability property).

Note that, for every $\delta >0$, there exists an $\epsilon >0$ such that every $\epsilon$-Nash equilibria will lie within a $\delta$-neighbourhood of some minimax strategy for the row player. Since $\mathbf{x}^{*}$ is the only minimax strategy for the row player, we can pick a sufficiently small $\epsilon$ such that any $\epsilon$-Nash equilibrium strategy must be within a $\delta$-neighbourhood of $\mathbf{x}^{*}$ for some $\delta > 0$. Thus, by guiding the row player towards an $\epsilon$-Nash equilibrium strategy, the LRCA algorithm has guided the row player to a strategy within a $\delta$-neighbourhood of $\mathbf{x}^{*}$, as required.

Note that there are many no-regret algorithms with stability property, %(i.e., our technique can be used against a large class of no-regret algorithms). 
For example, we prove below that a wide class of no-regret algorithms, Follow The Regularized Leader (FTRL), are stable:
\begin{theorem} \label{FTRL has stability property}
Suppose the row player follows a FTRL algorithm with regularizer  $R(x)$ defined as:
\[x_t = \argmin_{x \in \Delta_n} x^\top  \left(\sum_{i=1}^{t-1} Ay_i\right) + R(x).\]
If there exists a fully-mixed minimax equilibrium strategy for the row player, then FTRL is stable.
\end{theorem}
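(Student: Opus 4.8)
The plan is to show that whenever the column player plays $\mathbf{y}^*$, the feedback vector $A\mathbf{y}^*$ that is appended to the cumulative loss in the FTRL update is a constant multiple of the all-ones vector, so that appending it does not move the minimiser over the simplex.

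First I would unpack what it means for $\mathbf{y}^*$ to be a minimax strategy for the column player. Since the column player maximises $\mathbf{x}^\top A\mathbf{y}$, the strategy $\mathbf{y}^*$ guarantees $\mathbf{x}^\top A\mathbf{y}^* \ge v$ for every $\mathbf{x}\in\Delta_n$; taking $\mathbf{x}=\mathbf{e}_i$ gives $(A\mathbf{y}^*)_i \ge v$ for all $i\in[n]$. Let $\bar{\mathbf{x}}$ be a fully-mixed minimax strategy for the row player, which exists by hypothesis. Then $\bar{\mathbf{x}}^\top A\mathbf{y}^* = v$ (it is a minimax equilibrium), and $\bar{\mathbf{x}}^\top A\mathbf{y}^* = \sum_{i=1}^n \bar{\mathbf{x}}_i (A\mathbf{y}^*)_i$ is a convex combination, with strictly positive weights $\bar{\mathbf{x}}_i>0$, of quantities each at least $v$. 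Hence every term must equal $v$, i.e., $(A\mathbf{y}^*)_i = v$ for all $i$, which means $A\mathbf{y}^* = v\,\mathbf{1}$ where $\mathbf{1}$ denotes the all-ones vector.

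Next I would substitute this into the FTRL update. Suppose $\mathbf{y}_t = \mathbf{y}^*$. By definition, $\mathbf{x}_{t+1} = \argmin_{\mathbf{x}\in\Delta_n}\,\mathbf{x}^\top\!\big(\textstyle\sum_{i=1}^{t-1} A\mathbf{y}_i\big) + \mathbf{x}^\top A\mathbf{y}_t + R(\mathbf{x})$. Plugging in $A\mathbf{y}_t = v\mathbf{1}$ and using $\mathbf{x}^\top\mathbf{1}=1$ on the simplex, the middle term equals the constant $v$, independent of $\mathbf{x}$. Therefore the objective defining $\mathbf{x}_{t+1}$ differs from the one defining $\mathbf{x}_t = \argmin_{\mathbf{x}\in\Delta_n}\,\mathbf{x}^\top\!\big(\textstyle\sum_{i=1}^{t-1} A\mathbf{y}_i\big) + R(\mathbf{x})$ only by an additive constant, so the two minimisers coincide (invoking strict convexity of $R$ for uniqueness, as is standard for FTRL regularisers). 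Thus $\mathbf{x}_{t+1}=\mathbf{x}_t$, which is precisely the stability condition of Definition~\ref{stability condition}.

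There is no serious obstacle in this argument; the one step that needs care is the support/complementary-slackness argument that yields $A\mathbf{y}^* = v\mathbf{1}$, and this is exactly where the fully-mixed hypothesis is essential: without a fully-mixed row-player minimax strategy, $(A\mathbf{y}^*)_i$ could strictly exceed $v$ on some coordinates, the appended feedback would no longer be a constant vector, and the FTRL iterate could genuinely move.
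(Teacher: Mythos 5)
Your proof is correct and follows essentially the same route as the paper: establish $A\mathbf{y}^* = v\mathbf{1}$ from the existence of a fully-mixed row minimax strategy, then observe that the appended FTRL term $\mathbf{x}^\top A\mathbf{y}^*$ is constant on $\Delta_n$ so the minimiser is unchanged. If anything, you supply details the paper omits, namely the complementary-slackness argument justifying $A\mathbf{y}^*=v\mathbf{1}$ and the remark that strict convexity of $R$ is needed for the argmin to be well defined.
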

\begin{proof}
As there exists a fully-mixed minimax equilibrium strategy for the row player, we have $A\mathbf{y}^*=\left[v, \dots, v\right]^{T}$. Thus, we have:
\[\mathbf{x}^TA\mathbf{y}^*=v \quad \forall \mathbf{x} \in \Delta_n\]
When the column player follows the minimax strategy, the minimization for $\mathbf{x}_t$ and $\mathbf{x}_{t+1}$ only differ by a constant term $v$, so their solutions are the same.
\end{proof}
Note that the FTRL framework, with appropriately chosen $R(x)$, can recover many popular no-regret algorithms, such as online mirror descent, multiplicative weights update, and Hannan's algorithm (a.k.a. Follow the Perturbed Leader). See, e.g., ~\cite{mcmahan2011follow,Arora2012} for more details.
\section{Conclusion}
In this paper, we have studied a repeated two-player zero-sum game setting in which, the column player has control over the payoff matrix and aims to guide the row player into exhibiting a specific behaviour in the form a mixed strategy. We developed a novel method for constructing a payoff matrix which guarantees the only minimax strategy available to the row player is the mixed strategy of interest to the system designer. This method was based on simultaneously satisfying three systems of inequalities, two of which were linear. Using such a payoff matrix, we showed that the LRCA no-regret learning algorithm can be used to effectively guide the row player within a $\delta$-neighbourhood of the strategy desired. Moreover, we showed that the last iterate convergence properties of the LRCA algorithm apply to a much wider class of no-regret algorithms than previously acknowledged. This class of algorithms is described the by stability property, introduced and defined in Section \ref{section: last round convergence}. 

One disadvantage of our approach is that the system designer requires full control over the payoff matrix. In the future, we will consider a setting in which the column player has limited control over the payoff matrix. In reality, there may also exist a ground truth payoff matrix, which the system designer must modify at some cost. In this case, choosing a suitable payoff matrix forms a difficult optimisation problem, and it is open question whether such a problem can be solved efficiently.

%We showed that by, the column player can design a payoff matrix $A$ such that when she uses a no-regret algorithm LRCA to play against the row player, then the row player's strategy will converge to a given strategy in the favor of the column player. 
%The last iterate convergence result in this paper strengthens the classic result about average convergence in online learning literature and give a natural idea on how a Nash equilibrium can arise in practise.

\bibliographystyle{plain}
\bibliography{main}
\end{document}